\pdfoutput=1
\documentclass[nofootinbib,aps]{revtex4}
\usepackage{amssymb,amsmath,graphicx,color,microtype,amsthm}
\usepackage{enumerate}
\usepackage{slashed}
\usepackage{xcolor}
\usepackage{placeins}
\usepackage{hyperref}
\hypersetup{
	colorlinks=true,
	linkcolor=blue,
	filecolor=blue,      
	urlcolor=blue,
	pdftitle={Probing P and CP Violations on the Cosmological Collider},
	bookmarks=true,
	citecolor=blue
}

\newtheorem*{theorem}{Theorem}
\usepackage[paperwidth=210mm,paperheight=297mm,centering,hmargin=2cm,vmargin=2.5cm]{geometry}
\graphicspath{{fig/}}

\def\bge{\begin{equation}}
\def\ede{\end{equation}}
\def\bga{\begin{aligned}}
\def\eda{\end{aligned}}
\def\bgp{\begin{pmatrix}}
\def\edp{\end{pmatrix}}
\def\bgs{\begin{subequations}}
\def\eds{\end{subequations}}
\newcommand{\order}[1]{\mathcal{O}({#1})}
\def\di{{\mathrm{d}}}

\def\la{\langle}\def\ra{\rangle}

\setlength\unitlength{1mm}

\def\to{\rightarrow}

\def\ii{\mathrm{i}}

\def\ep{\epsilon}

\def\si{\sigma}

\def\Re{\mathrm{Re}\,}
\def\Im{\mathrm{Im}\,}

\newcommand{\wt}[1]{\mkern 2mu \widetilde{\mkern -2mu #1 \mkern -2mu}\mkern 2mu}

\begin{document}
\title{Probing P and CP Violations on the Cosmological Collider}
\author{Tao Liu$^{1,2}$}
\email{taoliu@ust.hk}
\author{Xi Tong$^{1,2}$}
\email{xtongac@connect.ust.hk}
\author{Yi Wang$^{1,2}$}
\email{phyw@ust.hk}
\author{Zhong-Zhi Xianyu$^{3}$}
\email{zxianyu@g.harvard.edu}
\affiliation{${}^1$Department of Physics, The Hong Kong University of Science and Technology, \\
	Clear Water Bay, Kowloon, Hong Kong, P.R.China}
\affiliation{${}^2$The HKUST Jockey Club Institute for Advanced Study, The Hong Kong University of Science and Technology, \\
	Clear Water Bay, Kowloon, Hong Kong, P.R.China}
\affiliation{${}^3$Department of Physics, Harvard University, 17 Oxford Street, Cambridge, MA 02138, USA}

\begin{abstract}
	In direct analogy to the 4-body decay of a heavy scalar particle, the 4-point correlation function of primordial fluctuations carries P and CP information. The CP violation appears as a P-odd angular dependence in the imaginary part of the trispectrum in momentum space. We construct a model with axion-like couplings which leads to observably large CP-violating trispectrum for future surveys. Furthermore, we show the importance of on-shell particle production in observing P- and CP-violating signals. It is impossible to observe these signals from local 4-scalar EFT operators that respect dilation symmetry, and thus any such observation can rule out single-field EFT with sufficiently small slow-roll parameters. This calculation opens a new frontier of studying P and CP at very high energy scales.  
\end{abstract}

\maketitle
\section{Introduction}

The study of discrete symmetries and their breaking has a rich history and remains an active field of research in fundamental physics. Parity (P), charge conjugation (C) and their joint transformation, namely CP, were all known to be violated in weak interaction \cite{Lee:1956qn,Christenson:1964fg}. On the other hand, the combination of CP with time reversal T, namely CPT, is known to be an unbroken symmetry of any quantum field theory respecting Lorentz symmetry, which is the celebrated CPT theorem \cite{Schwinger:1951xk,Bell:1996nh}. 

In the Standard Model (SM) of particle physics, the weak interaction maximally breaks P but preserves CP for one generation of fermions. The CP violation in the electroweak sector appears only when there are at least three generations of fermions, for there is a CP-violating complex phase in the Cabbibo-Kobayashi-Maskawa matrix \cite{Kobayashi:1973fv} that cannot be rotated away. 
The $\theta$ terms of the gauge fields introduce additional CP-violations in the theory. The $\theta$ terms are not observable in the electroweak dynamics, while introducing a CP-violating neutron electric dipole moment (EDM) in the strong sector. The long-undetected neutron EDM puts a tight constraint on the strong vacuum angle $\theta_c<10^{-10}$ which is unnaturally small \cite{Baker:2006ts}. This so-called strong CP problem can be solved by the Peccei-Quinn mechanism \cite{Peccei:1977hh}, in which $\theta_c$ is promoted to a dynamical field known as axion \cite{Weinberg:1977ma,Wilczek:1977pj}, and stabilized to a zero value by the QCD instanton effect. In addition, CP violation beyond SM is required for successful baryogenesis in the early universe.

It is thus of fundamental importance to measure the effects of CP violation in different ways. On particle colliders, we measure cross sections or differential decay rates which are functions of 3-momenta of outgoing particles. In such situations, CP violation usually manifests itself as P-odd combination of external momenta, which requires a totally anti-symmetric Levi-Civita tensor $\epsilon^{ijk}$. Therefore, we need at least three linearly independent momenta $k_{\alpha}~(\alpha=1,2,3)$ to contract all indices in $\epsilon^{ijk}$. As a consequence, $\epsilon^{ijk}k_{1i}k_{2j}k_{3k}$ is nonzero only when the three momenta are not coplanar. Given one further constraint of momentum conservation, this means that we need at least four external particles to form such a P-odd combination.

For instance, to study the CP properties of a heavy scalar $X$, one often uses the decay channel $X\rightarrow VV\rightarrow 4f$ where $V$ represents a gauge boson that subsequently decays into a pair of fermions $ff$. The momenta of the four final fermions can then form a P-odd function. This decay pattern has been evoked to study CP violation in the neutral pion decay \cite{Yang:1950rg}, 
$B$-physics \cite{Dunietz:1990cj,Kramer:1991xw}, and Higgs physics \cite{DellAquila:1985mtb,Soni:1993jc,Barger:1993wt,Choi:2002jk,Bolognesi:2012mm,Kovalchuk:2009zz,Hagiwara:2009wt,Artoisenet:2013puc}.

In this paper, we propose a new way to probe P- and CP-violating effects beyond the TeV-scale collider experiments, making use of the cosmic correlation functions of primordial fluctuations generated during inflation.

The idea of quasi-single field inflation and a cosmological collider has been proposed and studied in the recent years \cite{Chen:2009we,Chen:2009zp,Baumann:2011nk,Assassi:2012zq,Chen:2012ge,Noumi:2012vr,Pi:2012gf,Gong:2013sma,Arkani-Hamed:2015bza,Chen:2016nrs,Chen:2016uwp,Chen:2016hrz,Lee:2016vti,An:2017hlx,An:2017rwo,Kumar:2017ecc,Kumar:2018jxz,Wang:2018tbf,Lu:2019tjj,Alexander:2019vtb,Hook:2019zxa,Hook:2019vcn}. Assuming an inflationary background, the cosmological collider aims to study properties like the mass, spin and couplings of quantum fields in the very early universe by measuring correlation functions of primordial fluctuations. The energy scale of the cosmological collider, set by the Hubble constant during inflation $H\lesssim 10^{13-14}$ GeV, is much higher than earth-based colliders of any type in any foreseeable future. Its underneath idea is that particles produced in the exponentially expanding spacetime interact with each other and leave characteristic imprints on the correlation functions of curvature fluctuations and tensor fluctuations (primordial gravitational waves). For example, the mass of the mediator is encoded in the scaling/oscillation behavior of correlation functions in the soft limit, its spin is extracted from the angular dependence $P_S(\cos\theta)$ \cite{Alexander:2019vtb}, and the couplings are read from the size of non-Gaussianities $f_{NL}$ in the correlation functions. For light fields, signals are usually large \cite{Chen:2009zp}. For heavy fields, signals are in general suppressed by Boltzmann factors, but can be naturally lifted up in the presence of chemical potentials \cite{Chen:2018xck,Chua:2018dqh}, or in the scenario of special inflation models \cite{Flauger:2016idt,Tong:2018tqf}. Even for off-shell production of massive particles, the EFT description still gives a signal which is only power-law-suppressed \cite{Arkani-Hamed:2018kmz}. Therefore, with future experiments \cite{Dore:2014cca,Munoz:2015eqa,Meerburg:2016zdz,Ahmed:2014ixy,Suzuki:2015zzg,Abazajian:2016yjj,Matsumura:2013aja,Li:2017drr} on primordial non-Gaussianities and gravitational waves on the way, the possibility of utilizing this cosmological collider to probe particle physics at extremely high energy scales is tantalizing and promising.

In this paper we construct models to generate observably large CP-violating trispectrum, similar to the decay plane correlations in the $X\rightarrow VV\rightarrow 4f$ process. Our proof-of-concept calculation should be easily embedded into more realistic models. Our model borrows the structure of the Higgs-gauge sector of SM, and also makes use of a rolling axion-like field $\chi(t)$ that couples to a massive $U(1)$ gauge boson which we simply call $Z$ boson. The axion field can be either QCD axion or string axion or any axion-like particle. It is also possible to identify the axion as the inflaton. 

In our model, CP is spontaneously broken by the rolling background of the axion. Through the coupling between $Z$ to the inflaton fluctuation $\varphi$ and the Higgs field $h$ that gives mass to $Z$ and is derivatively mixed with the inflaton, the 4-point correlation function of $\varphi$ develops an imaginary part, signaling P-violation. Furthermore, the imaginary part is an odd function of the dihedral angle between two planes defined by four external momenta. In a parameter regime where loop expansions are trustworthy, the signal can reach up to $\tau_{NL}\sim\mathcal{O}(10^2)$. Stronger couplings and IR growth may further enhance the signal strength. Since the current observation by Planck 2013 \cite{Ade:2013ydc} gives $\tau_{NL}<2800$ (95\% CL), the signals in our model can be searched for in future surveys of primordial non-Gaussianities
.

In addition, by studying the large mass EFT description of our model, we conclude that in de Sitter (dS) spacetime, the infinite tower of local P- and CP-violating operators made of four inflatons are unobservable, since they contribute to a pure phase in the wavefunction of the universe. Thus the CP-violation signals in our model come from the on-shell particle production due to chemical potential and spacetime expansion. The extrapolation of the null result in exact dS to real inflationary scenario is a slow-roll suppressed signal. Hence perturbative unitarity and the smallness of slow-roll parameters should put a bound on the strength of the CP-violation signal in single field inflation EFT. A violation of this bound would indicate the non-local particle production effect present in quasi-single field inflation, along with potentially large cosmological collider signals.

Existing studies of P-violating bispectra and trispectra \cite{Kamionkowski:2010rb,Maldacena:2011nz,Soda:2011am,Shiraishi:2011st,Shiraishi:2012sn,Shiraishi:2013kxa,Shiraishi:2016mok,Barnaby:2012xt,Cook:2013xea,Bartolo:2017szm,Bartolo:2018elp,Bartolo:2019eac,Maleknejad:2016qjz,Maleknejad:2018nxz} require the presence of either tensor modes or broken rotational symmetry. These studies are thus aimed to probe the P or CP of the inflaton background or the gravitational interactions. Our construction is different in that none of these two ingredients are needed, and our motivation is to test the violation of P and CP in particle physics. 

This paper is organized as follows. We first briefly review in Sect.~\ref{SectXVV4f} the 4-body decay $X\rightarrow VV\rightarrow 4f$ of a heavy scalar $X$, and review in Sect.~\ref{SectCPinNG} the basics of primordial non-Gaussianity. We then show a CP-violating signal in primordial trispectrum resulting from a toy example in Sect.~\ref{SectToy}. We provide a more realistic model in Sect.~\ref{SectaZhphi} and study its properties for different chemical potential and mass choices. 
We conclude in Sect.~\ref{Conclusions}.

\section{Decay plane correlation in $X\rightarrow VV\rightarrow 4f$}\label{SectXVV4f}

We are mainly interested in the CP-violating effects in the correlation functions of primordial density fluctuations, $i.e.$, the cosmological collider observables. Before a more systematic study of this topic, it is useful to recall how to probe CP-violating effects in the decay of a heavy scalar particle $X$ on a collider through the channel $X\to VV\to 4f$. Following \cite{Bolognesi:2012mm,Kovalchuk:2009zz}, we give a concise review of this useful decay channel. In FIG.\;\ref{momentaConfig} we show the Feynman diagram of this process in the left panel and the momentum configuration of the four final fermions in the right panel. Consider first the decay of $X$ with 4-momentum $p$ into two off-shell gauge bosons $VV$ with 4-momenta $q_1,q_2$ and polarization vectors $\epsilon_1,\epsilon_2$. The most general amplitude for this process respecting Lorentz symmetry is
\begin{equation}\label{XVVparametrization}
	\mathcal{A}(X\rightarrow VV)=\mathcal{F}_1 \epsilon_1^*\cdot\epsilon_2^*+\frac{\mathcal{F}_2 }{m_X^2}(\epsilon_1^*\cdot p)(\epsilon_2^*\cdot p)+i\frac{\mathcal{F}_3}{m_X^2}\epsilon^{\mu\nu\rho\sigma}p_\mu P_\nu\epsilon^*_{1\rho} \epsilon^*_{2\sigma}~,
\end{equation}
where 
$P=p_1-p_2$. 
The first two terms stand for P-even $S$-wave contribution and $D$-wave contribution, while the last term is the P-odd $P$-wave amplitude. The form factors $\mathcal{F}_i~(i=1,2,3)$ are functions of momentum squares. It is easy to identify the EFT operators corresponding to these form factors. At the leading order of a gradient expansion of the EFT operators, we have $\mathcal{F}_1  \leftrightarrow X V_\mu V^\mu$, $\mathcal{F}_2\leftrightarrow X V_{\mu\nu} V^{\mu\nu}$ and $\mathcal{F}_3\leftrightarrow X V_{\mu\nu} \tilde{V}^{\mu\nu}$.

The vector bosons subsequently decay to Dalitz pairs, with an amplitude given by
\begin{equation}
\label{AXVV4f}
	\mathcal{A}(X\rightarrow VV\rightarrow 4f)=\mathcal{A}_{\rho\sigma} (X\rightarrow VV) D_V^{\rho\alpha}(\bar{u}\Gamma_\alpha v)D_V^{\sigma\beta}(\bar{u}\Gamma_\beta v)~,
\end{equation}
with $\Gamma_\mu$ being the 1PI vertex of the fermion-vector interaction. Focusing on the $P$-wave contribution, we see
\begin{equation}
	\mathcal{A}(X\rightarrow VV\rightarrow 4f)|_\text{$P$-wave}=i\frac{\mathcal{F}_3}{m_X^2}\epsilon_{\mu\nu\rho\sigma}p^\mu q^\nu D_V^{\rho\alpha}(\bar{u}\Gamma_\alpha v)D_V^{\sigma\beta}(\bar{u}\Gamma_\beta v)~.
\end{equation}
To further simplify this expression we focus on the transverse polarizations of the vector boson and write $\bar{u}(q_1)\Gamma_\alpha v(q_2)\sim (q_1-q_2)_\alpha$. Consequently,
\begin{equation}
	\mathcal{A}(X\rightarrow VV\rightarrow 4f)|_\text{$P$-wave}\sim i\frac{\mathcal{F}_3}{m_X^2}\epsilon_{\mu\nu\rho\sigma}(p_1+p_2)^\mu (p_1-p_2)^\nu (q_1-q_2)^\rho (k_1-k_2)^\sigma~.
\end{equation}
Here we used the antisymmetry of the Levi-Civita symbol to simplify the tensor structure of the vector propagator. After boosting to the rest frame of the scalar, $p=p_1+p_2=(m_X,0,0,0)$ only has a time-like component and the amplitude is proportional to the rotational invariant
\begin{equation}\label{flatAmp}
\mathcal{A}(X\rightarrow VV\rightarrow 4f)|_\text{$P$-wave}\propto\epsilon^{ijk}(q_1+q_2)_i(q_1-q_2)_j(k_1-k_2)_k~.
\end{equation}
As a result, with spin degrees of freedom neglected, decaying through $P$-wave channel is forbidden if $\vec{q}_1$, $\vec{q}_2$ and $\vec{k}_1$ are coplanar. Around this planar configuration, the amplitude behaves as an odd function of the dihedral angle between the two decay planes spanned by $\vec{q}_1,\vec{q}_2$ and $\vec{k}_1,\vec{k}_2$. To show this more explicitly, we parametrize $\vec{q}_1=(q_1 \sin\theta_q,0,-q_1 \cos\theta_q)$, $\vec{q}_2=(-q_1 \sin\theta_q,0,-p+q_1 \cos\theta_q)$, $\vec{k}_1=( k_1\sin\theta_k\cos\phi,k_1 \sin\theta_k\sin\phi,k_1\cos\theta_k)$, $\vec{k}_2=(-k_1 \sin\theta_k\cos\phi,-k_1 \sin\theta_k\sin\phi,p-k_1\cos\theta_k)$ in FIG.~\ref{momentaConfig}. The amplitude is then proportional to 
\begin{equation}
	(\vec{q}_1+\vec{q}_2)\cdot[(\vec{q}_1-\vec{q}_2)\times(\vec{k}_1-\vec{k}_2)]=-4q_1k_1p \sin\theta_k\sin\theta_q\sin\phi~.
\end{equation}
\begin{figure}[h!]
	\centering
	\includegraphics[width=6cm]{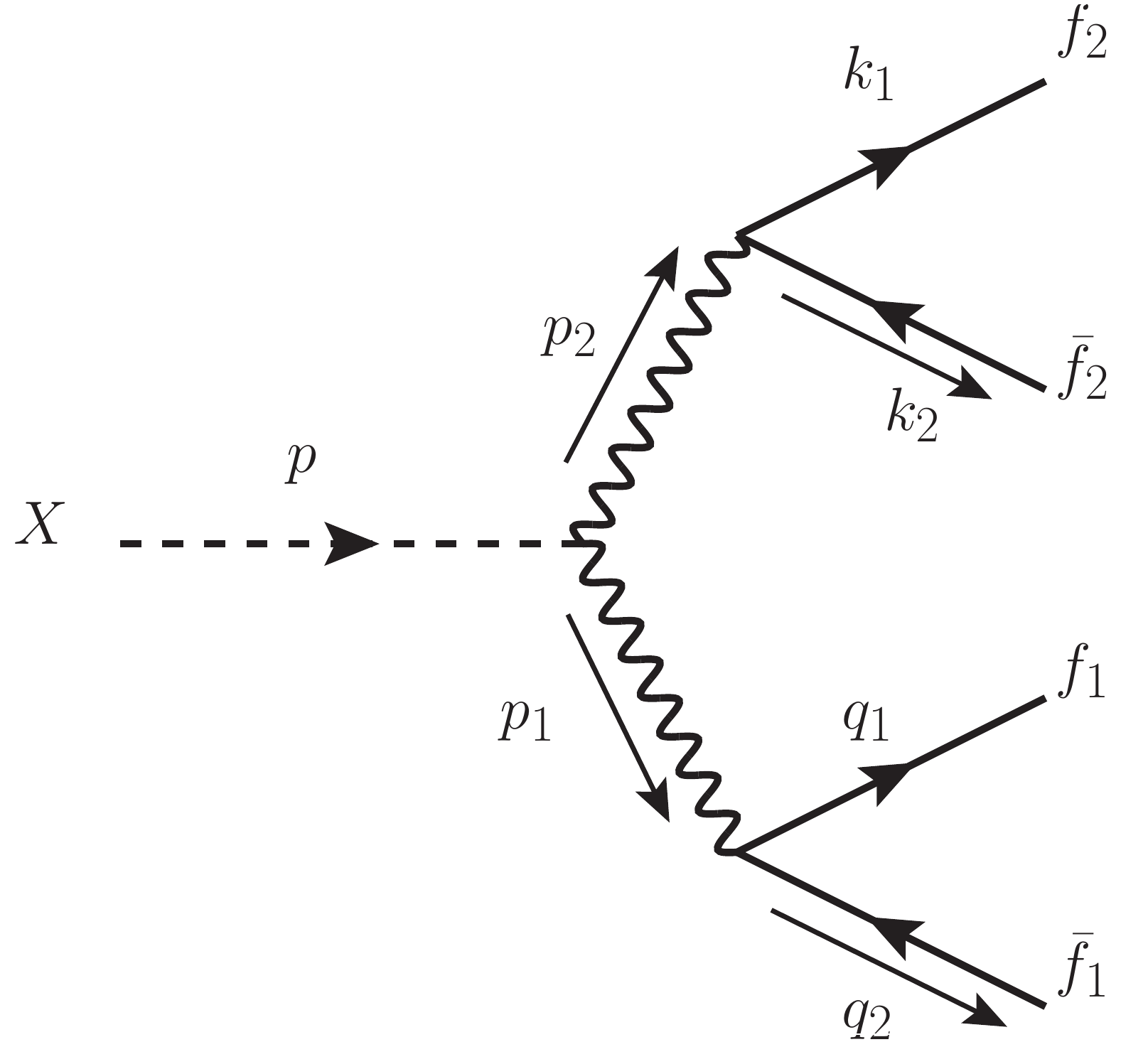}~~~~~~~~~~~~~~~~~~~~~~~~~~~~~\includegraphics[width=6cm]{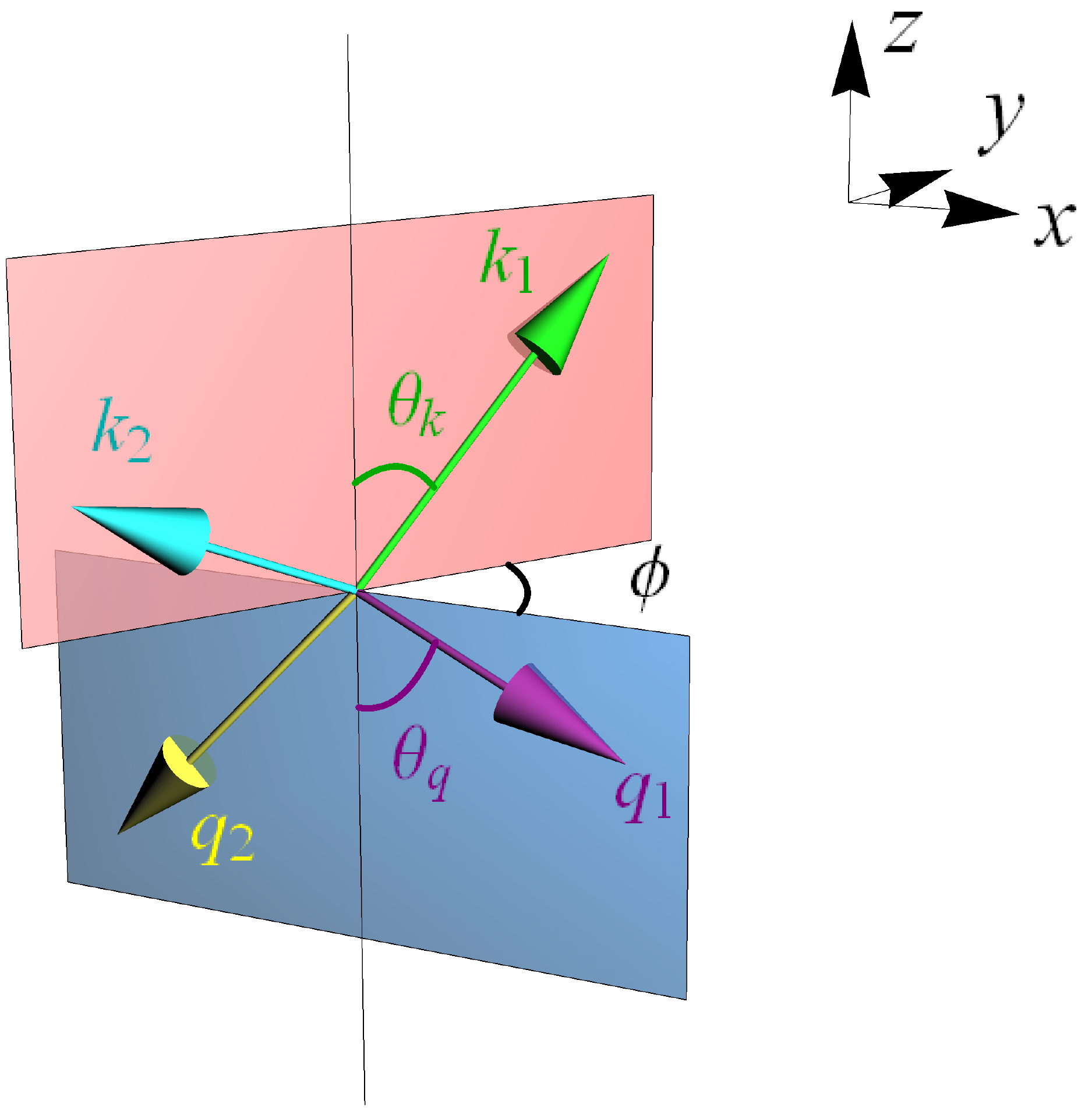}
	\caption{The kinematics of $X\rightarrow VV\rightarrow 4f$.}\label{momentaConfig}
\end{figure}
If the dynamical factor is regular around $\sin\phi=0$, the amplitude is odd in $\sin\phi\leftrightarrow -\sin\phi$, with $\phi$ being the relevant dihedral angle. As we mentioned before, the P-odd shape vanishes when all momenta are coplanar $(\phi=0)$. 

To monitor the $\phi$ dependence in the correlation of decay products, we calculate the differential decay rate $\di\Gamma/\di\phi$, which is the modular square of the amplitude (\ref{AXVV4f}), with all final-state spin states summed, and all phase space variables except $\phi$ integrated,
\begin{eqnarray}
	\nonumber\frac{d\Gamma_{X\rightarrow VV\rightarrow 4f}}{d\phi}&=&\int d\Pi_{k_1,q_1,p,\theta_k,\theta_q}\sum \Big|\mathcal{A}|_\text{$S$-wave}+\mathcal{A}|_\text{$D$-wave}+\mathcal{A}|_\text{$P$-wave}\Big|^2\\
	&=&\int d\Pi_{k_1,q_1,p,\theta_k,\theta_q}\sum |A+B\sin\phi|^2=(\text{P-even})+(\text{P-even})\sin\phi~.
\end{eqnarray}
In the second line, we have explicitly spelled out the $\sin\phi$ dependence in the P-wave amplitude, and the coefficients $A$ and $B$ are both symmetric under $\phi\to-\phi$. In the final result, therefore, we get a piece odd in $\phi\to-\phi$ which arises from the interference between the P-even ($S$ and $D$-wave) amplitude and P-odd ($P$-wave) amplitude. The P-odd dependence in the final result is nonzero only when $AB\neq 0$, namely when both CP-even and CP-odd pieces are present. Therefore the P-odd behavior is a signature of CP violation. We note in particular that the case with $A=0$ and $B\neq 0$ will not generate a P-odd shape in the final result. 



\section{A Brief review of Primordial Non-Gaussianity}
\label{SectCPinNG}

In this section we provide a very brief review of primordial non-Gaussianity in the context of cosmological collider physics. We refer readers to \cite{Chen:2017ryl} for a pedagogical review.

In ordinary inflation scenarios, the primordial fluctuations as we observe today were generated from the quantum fluctuation of the inflaton field $\phi=\phi_0+\varphi$ during inflation\footnote{Notice that the inflaton field $\phi$ is to be distinguished from the dihedral angle mentioned above.}. We use $\phi_0$ to denote the background and $\varphi$ the fluctuation. Due to the flatness of the inflation potential, the fluctuation $\varphi$ behaves approximately as a massless scalar field. It is convenient to expand this fluctuation field in terms of Fourier modes $\varphi(\tau,\vec{k})$, and the mode function of positive frequency part is given by
\begin{equation}
\label{dephi}
  \varphi(\tau,\vec{k})=\frac{H}{\sqrt{2k^3}}(1+\ii k\tau)e^{-\ii k\tau},
\end{equation}
where $H$ is the Hubble parameter during inflation and $\tau$ is the conformal time which goes from $-\infty$ to 0. At the late time limit $\tau\to 0$, the mode function approaches to a constant $\varphi(0,\vec{k})=H/\sqrt{2k^3}$, and thus provides the initial condition for the density fluctuations of our universe. 

From the observation we can probe the $n$-point correlation functions of $\varphi$, $i.e.$, $\la\varphi(\tau,\vec{k}_1),\cdots \varphi(\tau,\vec{k}_n)\ra$. The 2-point function gives the power spectrum which is well-measured today. The late time limit of the mode function (\ref{dephi}) predicts a scale invariant power spectrum $\la\varphi^2\ra\sim H^2/(2k^3)$, which is correct up to slow-roll corrections. Higher point correlations ($n\geqslant 3$) provide the information about the interactions of $\varphi$ modes, and are known as primordial non-Gaussianities. Therefore, we can think of non-Gaussianity effectively as ``inflaton collisions''. Upon a gauge transformation $\zeta=-(H/\dot\phi_0)\varphi$ that translates inflaton fluctuations into curvature fluctuations $\zeta$, the inflaton correlations can also be expressed in terms of $\zeta$, which is also widely used. Here $\dot\phi_0$ is the rolling speed of the inflaton background and can be approximated as a constant during inflation for our calculation. 

The inflaton field is not the only field present during inflation. The fast expansion of the inflationary universe allows for the production of any heavy particles with mass up to $\order{H}$. Heavy particles decay quickly in an expanding background and cannot survive the late time limit. However, they may interact with the inflaton fluctuation $\varphi$ before they decay, leaving imprints on the non-Gaussianity. Therefore, measuring non-Gaussianity can be viewed as a way to extract information of short-lived heavy particles by monitoring the correlation of long-lived inflaton modes. This is completely in parallel with the logic of modern collider experiments, and for this reason, this approach is called the cosmological collider physics. 

However, we should also point out a major distinction from collider experiments. In collider experiments, it is usually possible to reconstruct the phase space of a process at the single-event level. On the cosmological collider, single-event level signals are typically overwhelmed by large fluctuations in the IR and are therefore undetectable. Instead, we can only measure the statistical average of a large number of events from correlation functions. In this regard, the cosmological collider is less informative than ordinary collider experiments.

The techniques of computing non-Gaussianity is quite similar to usual calculation of $S$-matrix, with some complication from the curved spacetime background. In particular, we can still use a diagrammatic approach to organize the perturbative expansion. The key difference is the lack of explicit in and out states in the case of cosmic correlators. As a result, we should calculate Schwinger-Keldysh diagrams rather than ordinary Feynman diagrams, although the ``Feynman rules'' are similar. The rules not only allow us to calculate the non-Gaussianty explicitly, but also provide a way to estimate the size of the result before a detailed calculation. 

In this paper we are mostly interested in the 4-point correlation of the inflaton fluctuations, since the CP violation in general implies a shape containing Levi-Civita symbol $\ep^{ijk}$. As explained above, we need at least three independent momenta to form a nonzero result when contracted with $\ep^{ijk}$. This means that we need to consider at least 4-point correlations, since the external momenta of $n$-point correlations are subject to momentum conservation. 

At the 4-point level, it is customary to parameterize the correlation function by the trispectrum as
\begin{align}
\label{trispec}
  \la\zeta(\vec{k}_1)\zeta(\vec{k}_2)\zeta(\vec{k}_3)\zeta(\vec{k}_4)\ra'=(2\pi)^6P_\zeta^3\frac{K^3}{(k_1k_2k_3k_4)^3}T(\vec{k}_1,\vec{k}_2,\vec{k}_3,\vec{k}_4),
\end{align}
where $K=\sum_{i=1}^4 k_i$, and the prime on $\la\cdots\ra'$ means the $\delta$-function of momentum conservation is removed. Here $P_\zeta$ is the power spectrum defined via $(2\pi^2/k^3)P_\zeta=\la\zeta(\vec k)\zeta(-\vec k)\ra'$ and is measured to be $P_\zeta\simeq 2\times 10^{-9}$ at the CMB scale. Again using the conversion $\zeta=-(H/\dot\phi_0)\varphi$, we can find an estimate of the trispectrum $T$ as,
\begin{equation}
  T\sim\frac{P_\zeta^{-1}}{(2\pi)^2}\times\left(\text{loop factors}\right)\times\left(\text{vertices}\right)\times\left(\text{propagators}\right).
\end{equation}
Here again every dimensional parameter is measured in the unit of the Hubble parameter $H$, as long as the particles are not far heavier than the scale $H$. The four external momenta subject to momentum conservation can form non-coplanar configuration and thus we can look for signals of CP violation from the trispectrum. In the next section we use a toy example to illustrate how this can be achieved.

\section{A toy model: Scalar QED in de Sitter}\label{SectToy}

To draw analogy to the collider event in FIG.~\ref{momentaConfig}, we now consider a diagram contributing to the 4-point cosmic correlator with exactly the same topology, shown in FIG.~\ref{ToyDiagram}. In this figure, we are imagining a pair of scalar fields $\phi^\pm$ (thick dashed lines with arrows) being external lines. These fields are charged under a $U(1)$ gauge group, and thus interact with the $U(1)$ gauge field $A_\mu$ (wiggly lines) through minimal coupling as in a scalar QED. Unlike the collider process in FIG.~\ref{momentaConfig}, here we do not have an initial heavy scalar particle $X$. Instead, we can introduce a CP-odd operator insertion $\theta(\tau)F\wt F$. Here $\theta(\tau)$ has explicit time dependence and, as we shall see below, behaves effectively as a particle source producing gauge bosons $A_\mu$ during inflation. Such a $\theta$ term can be easily generated from a coupling to an axion-like field $\chi$ through the term $\chi F\wt F$, by allowing the background of $\chi$ to slowly roll down its potential during inflation.

\begin{figure}[h!]
	\includegraphics[width=6cm]{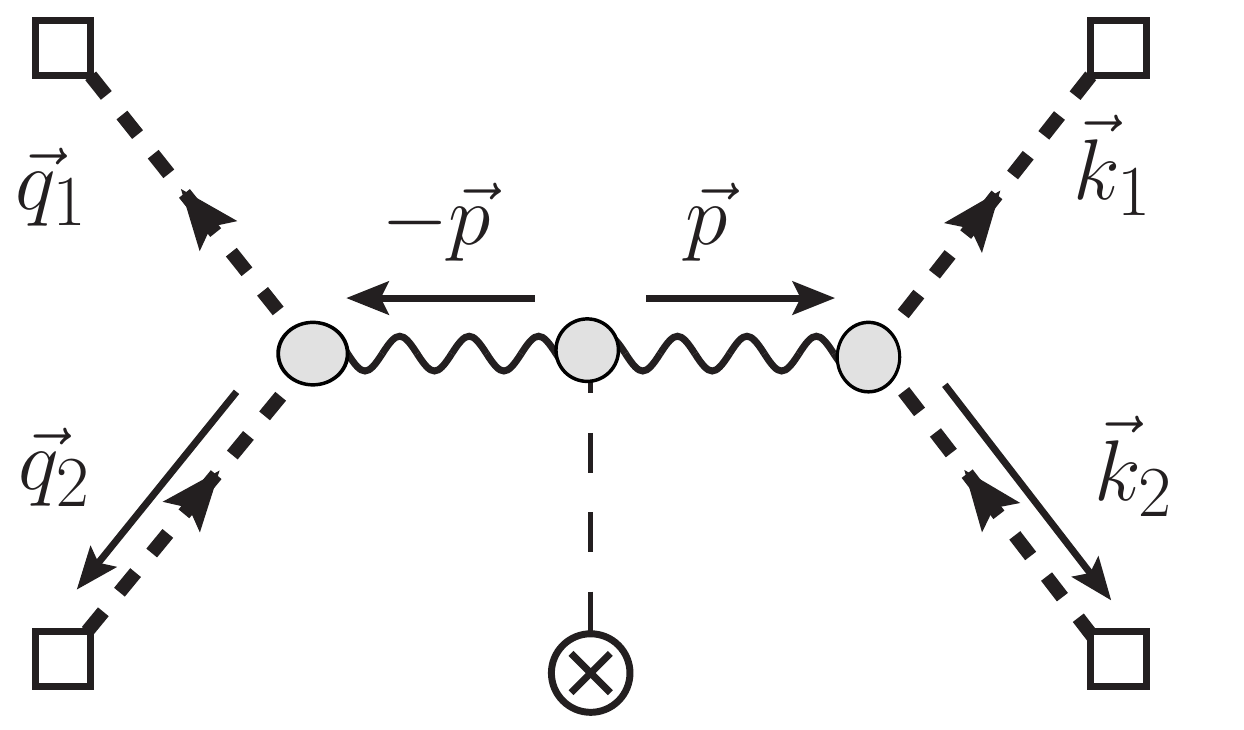}\\
	\caption{The induced CP-violating t-channel diagram. See the text for explanation of various lines. We adopted the diagrammatic notation in \cite{Chen:2017ryl}.} \label{ToyDiagram}
\end{figure}

The model generating the above diagram is simply the scalar QED with an additional time-dependent $\theta$-term. Its action is
\begin{equation}
	S=\int d\tau d^3x \sqrt{-g}\left[-g^{\mu\nu}D_\mu\phi^*D_\nu\phi-m^2\phi^*\phi-\frac{1}{4}g^{\mu\rho}g^{\nu\sigma}F_{\mu\nu}F_{\rho\sigma}-\frac{1}{4}c_0\theta(\tau) \mathcal{E}^{\mu\nu\rho\sigma}F_{\mu\nu}F_{\rho\sigma}\right]~,
\end{equation}
where $\mathcal{E}^{\mu\nu\rho\sigma}=\epsilon^{\mu\nu\rho\sigma}/\sqrt{-g}$ is the covariant Levi-Civita tensor. The $\theta$-term is dynamical since it is explicitly time-dependent. Evaluating the above action on the inflationary background with the spacetime metric $g_{\mu\nu}=a^2(\tau)\eta_{\mu\nu}$ with $a\simeq-1/(H\tau)$, we have,
\begin{eqnarray}
	\nonumber S &=&\int d\tau d^3x\Big[-a^2\eta^{\mu\nu}\partial_\mu\phi^*\partial_\nu\phi-m^2a^4\phi^*\phi-\frac{1}{4}\eta^{\mu\rho}\eta^{\nu\sigma}F_{\mu\nu}F_{\rho\sigma}\\
	&&~~~~~~~~~~~~~+ie a^2\eta^{\mu\nu} \phi^*\overleftrightarrow{\partial_\mu}\phi A_\nu-e^2 a^2\phi^*\phi A_\mu A_\nu \eta^{\mu\nu}-\frac{1}{4}c_0\theta(\tau) \epsilon^{\mu\nu\rho\sigma}F_{\mu\nu}F_{\rho\sigma}\Big]~.
\end{eqnarray}

In this toy example we neglect the back-reaction of the quantum fields on the spacetime geometry. In addition, we require $\langle\phi\rangle=0$ to keep   gauge invariance manifest. Upon integration by part, the last term takes the form of a Chern-Simons term with a time-dependent factor in the front. 
\begin{equation}
	-\int d\tau d^3x\frac{\theta(\tau)}{4} \epsilon^{\mu\nu\rho\sigma}F_{\mu\nu}F_{\rho\sigma}=\int d\tau\theta'(\tau)\int d^3x \epsilon^{ijk}A_i\partial_j A_k~.
\end{equation}
This is essentially the spatial integral of the Chern-Simons charge density $J_{CS}^0=\epsilon^{ijk}A_i F_{jk}$ \cite{Kharzeev:2009fn}. In other words, for $c_0 \theta'>0$ ($c_0 \theta'<0$) the rolling background pumps left-handed (right-handed) states out of the vacuum while destroying right-handed (left-handed) states into the vacuum. This can be viewed as the physical source of P-violation. 
We are interested in the decay of the photon pair into scalars, therefore no special care about boundary condition is needed. 

To pursue a perturbative calculation, 
we quantize the system using the Schwinger-Keldysh path-integral formalism \cite{Chen:2017ryl}. The relevant Feynman rules are given in Appendix ~\ref{FeynmanRulesAppendix}.

The t-channel diagram (shown in FIG.~\ref{ToyDiagram}) is given by

\begin{eqnarray}
	\nonumber\left\langle\phi_{\vec{q}_1}\phi^*_{\vec{q}_2}\phi_{\vec{k}_1}\phi^*_{\vec{k}_2}\right\rangle'_{t}&=&-e^2 c_0\sum_{\epsilon_2=\pm}\epsilon_2\int d\tau_1 d\tau_2 d\tau_3 a(\tau_1)^2 \theta'(\tau_2) a(\tau_3)^2\epsilon^{ijk}(q_1-q_2)_i p_j(k_1-k_2)_k\\
	\nonumber&&~~~~~~~~~~~\times G_{\epsilon_1}(q_1,\tau_1)G_{\epsilon_1}^*(q_2,\tau_1)D_{\epsilon_1\epsilon_2}(p,\tau_1,\tau_2)D_{\epsilon_2\epsilon_3}(p,\tau_2,\tau_3)G_{\epsilon_3}(k_1,\tau_3)G_{\epsilon_3}^*(k_2,\tau_3)\\
	&\equiv&(\vec{q}_1+\vec{q}_2)\cdot[(\vec{q}_1-\vec{q}_2)\times(\vec{k}_1-\vec{k}_2)]~F(\vec{q}_1,\vec{q}_2,\vec{k}_1,\vec{k}_2)~,
\end{eqnarray}
where $G$ is the propagator for a massive scalar in dS and $D$ denotes the propagator of a massless vector in flat spacetime with the tensor structure stripped (because gauge fields are conformally coupled in four-dimensional spacetime). The u-channel contribution is obtained by exchanging two anti-particles,
\begin{equation}
	\left\langle\phi_{\vec{q}_1}\phi^*_{\vec{q}_2}\phi_{\vec{k}_1}\phi^*_{\vec{k}_2}\right\rangle'_{u}=\left\langle\phi_{\vec{q}_1}\phi^*_{\vec{k}_2}\phi_{\vec{k}_1}\phi^*_{\vec{q}_2}\right\rangle'_{t}~.
\end{equation}
There is no $s$-channel contribution in this toy model since we can distinguish two charge eigenstates for external scalars. 
Therefore the total CP-odd contribution to the 4-point function is
\begin{eqnarray}\label{ToyAmp}
	\nonumber\left\langle\phi_{\vec{q}_1}\phi^*_{\vec{q}_2}\phi_{\vec{k}_1}\phi^*_{\vec{k}_2}\right\rangle'&=&2\left[\left\langle\phi_{\vec{q}_1}\phi^*_{\vec{q}_2}\phi_{\vec{k}_1}\phi^*_{\vec{k}_2}\right\rangle'_{t}+\left\langle\phi_{\vec{q}_1}\phi^*_{\vec{q}_2}\phi_{\vec{k}_1}\phi^*_{\vec{k}_2}\right\rangle'_{u}\right]\\
	&=&2(\vec{q}_1+\vec{q}_2)\cdot[(\vec{q}_1-\vec{q}_2)\times(\vec{k}_1-\vec{k}_2)]\left(F(\vec{q}_1,\vec{q}_2,\vec{k}_1,\vec{k}_2)-F(\vec{q}_1,\vec{k}_2,\vec{k}_1,\vec{q}_2)\right)~,
\end{eqnarray}
which bears an analogues form as its flat spacetime ancestor (\ref{flatAmp}). When the 4-point function is observed for a momentum set up as in FIG.~\ref{momentaConfig}, the trispectrum will acquire a $\sin\phi$ dependence that apparently violates parity conservation.

Some conceptual remarks are in order before we conclude this section.

First, in flat spacetime, the combined transformation CPT is an unbroken discrete symmetry as a consequence of Poincar\'e symmetry. Thus any CP violation in the theory is equivalently a T violation, and vice versa. However, the CPT theorem does not hold in its original form in curved spacetime. It is easy to find models violating T but preserving CP. For example, consider a minimally coupled real scalar in an expanding spacetime. The time reversal symmetry T is spontaneously broken by the expansion of the universe, so is the combined CPT. Hence CP violation is not automatic in an expanding universe, instead, it is model-dependent. 

Second, we claimed that the CP-odd shape in the trispectrum is a signal of CP violation and we have been comparing this signal with the CP-odd signal in the differential decay rate of $X\to 4f$ on a particle collider. But there is a subtlety in making this comparison. In cosmology we observe correlation functions instead of cross sections or decay rates. The cross sections are always modular squares of scattering amplitudes, which means that the appearance of Levi-Civita symbol in cross sections (or differential decay rates) must be from the interference between a CP-even piece and a CP-odd piece at the amplitude level. The existence of both CP-even and CP-odd pieces with the same outgoing states is usually required to establish the CP-violating signal. However, in cosmic correlators, there exists interference between process with different outgoing states. In particular, there is \textit{always} a CP-even Gaussian piece standing for free propagation, with which the CP-odd non-Gaussian piece can interfere. Thus in a sense, we observe amplitudes directly instead of their modular squares. Consequently, we can see Levi-Civita symbols directly at the amplitude level. This does not necessarily imply CP violation at the Lagrangian level, since we can assign $CP=-1$ to the scalar field (the axion-like field $\chi$ in our toy example). Nonetheless, we need a nonzero and rolling background of $\chi$ to generate the desired $\theta(\tau)$ term in this toy model, and therefore the CP is \emph{spontaneously} broken by the rolling classical profile of the axion field. So we can still say that the CP-odd shape in the trispectrum in our model is a signal of CP violation.
 
Third, a technical point to be made is that we require a pair of charged scalar fields appearing in external lines, not only to couple them to a $U(1)$ gauge boson, but also to make the four external lines non-identical. This is important to generate a nonzero CP-odd shape, since if we choose the four external lines to be identical, the combination 
in (\ref{ToyAmp}) will vanish.

\section{CP-violating trispectrum in a realistic model}\label{SectaZhphi}

In the last section we realized a CP-violating trispectrum in a toy model made of spectator fields on dS background. This result cannot be applied directly to generic inflation models because, as we mentioned above, we need two scalars with opposite $U(1)$ charge in external lines, to generate a nonzero CP-odd trispectrum. A pair of two distinguishable charged scalars are not available in minimal inflation scenarios, since that results in large isocurvature fluctuations which have been severely constrained by observations. On the other hand, if we simply replace all external lines by inflaton fluctuations, the CP-violating shapes will be canceled due to permutations (cf.\ the end of last section). Therefore, we need to find other ways to construct two distinguishable external lines. 

To this end, we still need at least two real scalar fields $\phi$ and $\si$. We will assume that $\phi$ is the inflaton field which has very flat potential and its fluctuation $\varphi$ is nearly massless. On the other hand, $\si$ is in general massive. To realize a process with topology similar to the previous example, we require a coupling $\partial_\mu\phi Z^\mu\sigma$ and also a two-point mixing $\dot{\phi}\sigma$ that converts the second scalar $\sigma$ to the observable $\phi$. These two couplings can appear naturally in an effective theory of inflaton plus a $U(1)$ gauge sector with a complex Higgs scalar field\footnote{Note that our model takes the form of the electroweak sector of SM, yet they are not necessarily the same. To be general, we choose to consider the fields in our model as BSM fields hereafter and comment later on the special case where they are the SM fields.}. Among the leading inflaton-matter couplings we have the following operators \cite{Kumar:2017ecc},
\begin{equation}
\mathcal{L}_{\text{int}}^{\text{inf-gauge}}=\frac{c_1}{\Lambda}\partial_\mu\phi(\mathcal{H}^\dagger D^\mu\mathcal{H})+\frac{c_2}{\Lambda^2}(\partial\phi)^2\mathcal{H}^\dagger\mathcal{H}+\cdots~.
\end{equation}
Upon symmetry breaking, either by input or by heavy-lifting, the $c_1$ term yields a triple vertex $\Delta\mathcal{L}_1=\frac{\rho_{1,Z}}{\dot{\phi}_0}h\partial_\mu \varphi Z^\mu$ that acts as a current-potential interaction, and the $c_2$ term results in a mixing between Higgs and inflaton $\Delta\mathcal{L}_2=-\rho_2\dot{\varphi}h$. The couplings are related to the EFT Wilson coefficients by $\rho_{1,Z}\equiv -\Im c_1\dot{\phi}_0 m_Z/\Lambda$ and $\rho_2\equiv 2c_2\dot{\phi}_0 v/\Lambda^2$.

In addition to the above two couplings, we also introduce the CP-violating interaction $\Delta\mathcal{L}_3=-\frac{c_0\theta(t)}{4}Z_{\mu\nu}Z_{\rho\sigma}\mathcal{E}^{\mu\nu\rho\sigma}$, where $\theta(t)$ is dependent on time. The CP violation induced by this operator will produce signatures in the trispectrum of $\varphi$. 

To summarize, with the inflaton background, our model consists of the following three couplings, 
\begin{equation}\label{AZHIEFTop}
	\Delta\mathcal{L}_1=\frac{\rho_{1,Z}}{\dot{\phi}_0}h\partial_\mu \varphi Z^\mu,~~~~~~\Delta\mathcal{L}_2=-\rho_2\dot{\varphi}h,~~~~~~\Delta\mathcal{L}_3=-\frac{c_0}{4}\theta(t)Z_{\mu\nu}Z_{\rho\sigma}\mathcal{E}^{\mu\nu\rho\sigma}~.
\end{equation}
As in the previous toy model, the time-dependent $\theta$-term is most naturally realized with a rolling axion field $\chi=f\theta$ where $f$ is the decay constant. In this case, CP is preserved at the Lagrangian level since the axion is CP odd, but CP is spontaneously broken by the rolling background of $\chi$. The dynamics of its classical profile is governed by the equation of motion
\begin{equation}\label{axionEOM}
	\ddot{\theta}+3H \dot{\theta}+\frac{V'(\theta)}{f^2}+\frac{c_0}{4f^2}\langle Z_{\mu\nu}Z_{\rho\sigma}\mathcal{E}^{\mu\nu\rho\sigma}\rangle=0~,
\end{equation}
where $V(\theta)=\Lambda_\chi^4\left(1-\cos\theta\right)$ is the axion potential. We require the energy density of the axion to be much smaller than that of the inflaton, $\Lambda_\chi^4\ll M_p^2 H^2$, to avoid multi-field inflation (for keeping things simple). The last term of (\ref{axionEOM}) comes from the back-reaction of perturbations on the background. Although in its apparent form this term appears to be a correction to the slope of the potential, it is actually proportional to the rolling speed $\dot{\theta}$ and thus serves as a frictional force $\Gamma\dot{\theta}\sim \frac{c_0}{4f^2}\langle Z_{\mu\nu}Z_{\rho\sigma}\mathcal{E}^{\mu\nu\rho\sigma}\rangle$. This is the usual dissipative effects due to particle production. The large friction produced by the combination of exponential spacetime expansion and dissipative effects tends to drive the axion to the slow-roll attractor phase, where $|\frac{\ddot{\theta}}{(3H+\Gamma)\dot{\theta}}|\ll 1$ and $\dot\theta\sim\frac{\Lambda_\chi^4}{f^2(3H+\Gamma)}\sim \text{const}$. For our purpose, it is convenient to absorb the axion rolling speed into the coupling constant and define an dimensionless parameter $c$ as
\begin{equation}
\label{cdef}
  c\equiv \frac{c_0\dot\theta}{H}\sim \frac{c_0 \Lambda_\chi^4}{f^2 (3H+\Gamma)H}~.
\end{equation}

We point out that the $\theta$ term is only P-violating by itself since it is C-invariant, hence breaking CP also. Even in the absence of a direct coupling between fermions and our axion, the time-dependent $\theta$ still provides a chemical potential for the fermion sector and thus brings extra P-violation that cannot be balanced by any C-violation. Consider the coupling of the $Z$ field to a fermion,
\begin{equation}
	\Delta\mathcal{L}_f=\bar{\psi}(i\slashed{D}-m)\psi-\frac{c_0}{4}\theta(t)Z_{\mu\nu}Z_{\rho\sigma}\mathcal{E}^{\mu\nu\rho\sigma}~.
\end{equation}
If $\theta(t)=\text{const}$, we can perform a global chiral redefinition $\psi'\equiv \exp\left[-i\alpha\gamma_5\right]\psi$ with $\alpha\propto\theta/2$, to eliminate the total derivative term, and also by doing so giving an invariant definition of fermion parity. However, if the $\theta$ term is dependent on time, there is no global chiral field redefinition that can eliminate the $\theta$ term once and for all. And the natural parity defined at one moment will differ from that of the next. Thus a chemical potential term for fermions will be induced at one-loop level, which is proportional to the rate of change of $\theta$:
\begin{equation}
	\begin{gathered}
	\includegraphics[width=3cm]{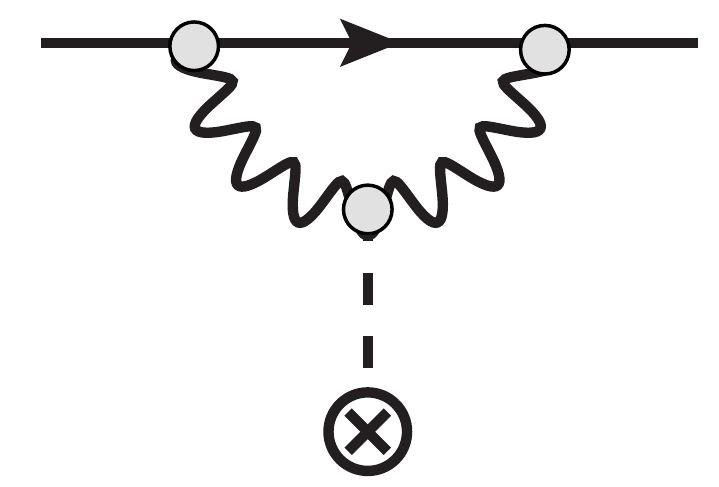}
	\end{gathered}\Rightarrow~\bar{\psi}\gamma_0\gamma_5\psi \partial_0\theta\propto c(n_R-n_L)~.
\end{equation}
This term is C-invariant but P-odd for fixed axion background. Using the EOM method described in Sect.~\ref{EOMmethod}, this simply corresponds to the one-loop self-energy for the fermion, with odd number of $\theta$ insertions contributing to the chemical potential term and even number of $\theta$ insertions contributing to the mass term and the field-strength renormalization factor. Note that in the SM setup, $U(1)_{B+L}$ is anomalous with respect to $SU(2)_L$ but not $U(1)_Y$. Thus only left-handed fermions are relevant to the induced chemical potential term $cn_{B+L}$.

\subsection{Leading-order perturbation theory}\label{1PTmethod}
For a perturbatively small $c\ll 1$, we can simply calculate the trispectrum to the leading order. This corresponds to the parameter regime where $\frac{c_0 \Lambda_\chi^4}{f^2 (3H+\Gamma)H}\ll 1$. Namely either the coupling $c_0$ is small or the axion rolling speed is slow. 

Again we can quantize the system using Schwinger-Keldysh formalism. The relevant Feynman rules are given in Appendix~\ref{FeynmanRulesAppendix}. Note that symmetry breaking gives $Z$ boson a mass and its different polarization modes have different EOMs. Since the $\theta$ factor already occupies the time-like component, only the spatial components of the gauge field propagator contribute. The longitudinal polarization is proportional to $\hat{p}_i \hat{p}_j$ and thus vanishes upon contraction with the Levi-Civita symbol. The transverse polarization is proportional to $\delta_{ij}-\hat{p}_i\hat{p}_j$ and is filtered to $\delta_{ij}$ by the same reasoning.

\begin{figure}[h!]
	\includegraphics[width=6cm]{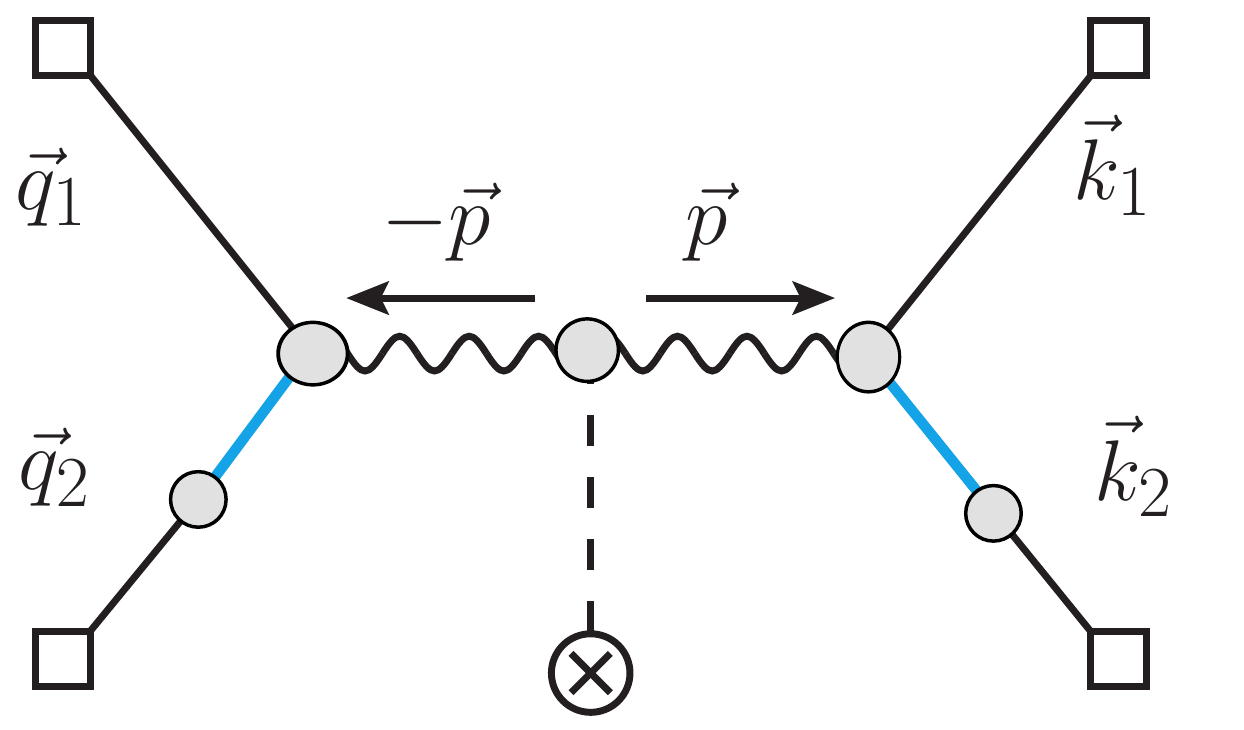}\\
	\caption{The leading-order CP-violating t-channel diagram.}\label{RealDiagram}
\end{figure}

The $t$-channel diagram is given by
\begin{eqnarray}
	\nonumber\left\langle\varphi_{\vec{q}_1}\varphi_{\vec{q}_2}\varphi_{\vec{k}_1}\varphi_{\vec{k}_2}\right\rangle'_{t}&=&-c_0\left(\frac{\rho_{1,Z}}{\dot{\phi}_0}\right)^2\rho_2^2\sum_{\{\epsilon_i\}=\pm}\epsilon_1\epsilon_2\epsilon_3\epsilon\epsilon'\int d\tau d\tau_1 d\tau_2 d\tau_3 d\tau' a(\tau)^3a(\tau_1)^2\theta'(\tau_2)a(\tau_3)^2a(\tau')^3\epsilon^{jmk}q_{1i}p_m k_{1l}\\
	\nonumber&&~~~~~~~~~~~~~~~~~~~~~~~~~~~~~~~\times G_{\varphi;\epsilon_1}(q_1,\tau_1)G_{h;\epsilon_1\epsilon}(q_2,\tau_1,\tau)\partial_\tau G_{\varphi;\epsilon}(q_2,\tau)\\
	\nonumber&&~~~~~~~~~~~~~~~~~~~~~~~~~~~~~~~\times D_{ij;\epsilon_1\epsilon_2}(p,\tau_1,\tau_2)D_{kl;\epsilon_2\epsilon_3}(p,\tau_2,\tau_3)\\
	\nonumber&&~~~~~~~~~~~~~~~~~~~~~~~~~~~~~~~\times G_{\varphi;\epsilon_3}(k_1,\tau_3)G_{h;\epsilon_3\epsilon'}(k_2,\tau_3,\tau')\partial_\tau' G_{\varphi;\epsilon'}(k_2,\tau')\\
	&\equiv&F(\vec{q}_1,\vec{q}_2,\vec{k}_1,\vec{k}_2)~(\vec{q}_1+\vec{q}_2)\cdot(\vec{q}_1\times\vec{k}_1)~,
\end{eqnarray}
where in the second step we used the fact that $D_{ij}\sim \delta_{ij}$ effectively. The explicit form of the propagators is dependent on their IR oscillation frequencies $\mu_h$ and $\mu_Z$, which are related to the field masses $m_h$ and $m_Z$ by
\begin{equation}
	\frac{m_h^2}{H^2}=\mu_h^2+\frac{9}{4}~~~\text{and}~~~\frac{m_Z^2}{H^2}=\mu_Z^2+\frac{1}{4}~.
\end{equation}
Because the inflaton is neutral, the four external lines should be completely symmetrized. Taking into account the symmetry $F(\#1,\#2,\#3,\#4)=F(\#3,\#4,\#1,\#2)$, we find the total 4-point function to be
\begin{eqnarray}\label{RealAmp}
	\nonumber\left\langle\varphi_{\vec{q}_1}\varphi_{\vec{q}_2}\varphi_{\vec{k}_1}\varphi_{\vec{k}_2}\right\rangle'&=&\Bigg\{\frac{1}{2}(\vec{q}_1+\vec{q}_2)\cdot[(\vec{q}_1-\vec{q}_2)\times(\vec{k}_1-\vec{k}_2)]\left[F(\vec{q}_1,\vec{q}_2,\vec{k}_1,\vec{k}_2)-F(\vec{q}_1,\vec{q}_2,\vec{k}_2,\vec{k}_1)\right]\\
	\nonumber&&+\left(\vec{q}_2\leftrightarrow\vec{k}_2\right)\\
	&&+\begin{pmatrix}
	\vec{q}_2 & \vec{k}_1 & \vec{k}_2 \\
	\downarrow & \downarrow & \downarrow \\
	\vec{k}_1 & \vec{k}_2 & \vec{q}_2
	\end{pmatrix}\Bigg\}+\Bigg\{F(\#1,\#2,\#3,\#4)\rightarrow F(\#2,\#1,\#4,\#3)\Bigg\}~,
\end{eqnarray}
where the three lines represent correspondingly $t,u,s$ channel contributions. This is again of the same form of (\ref{flatAmp}) and (\ref{ToyAmp}), hence sharing the $\sin\phi$ dependence for the two planes defined by four momenta. The coefficient function $F$ is dictated by detailed dynamics and can be calculated numerically. To simplify a five-layer time-ordered integral, we evoke the mixed propagator that was introduced in \cite{Chen:2017ryl} and reduce the integral to three layers.

From the definition of the trispectrum $T$ in (\ref{trispec}), we have
\begin{equation}\label{TrispecNorm}
	T^{PT}(\vec{q}_1,\vec{q}_2,\vec{k}_1,\vec{k}_2)=\frac{\dot{\phi}_0^2}{H^4}\frac{\left\langle\varphi_{\vec{q}_1}\varphi_{\vec{q}_2}\varphi_{\vec{k}_1}\varphi_{\vec{k}_2}\right\rangle'}{H^4}\frac{(q_1 q_2 k_1 k_2)^3}{K^3}.
\end{equation}
The trispectrum calculated from the above expression behaves as an odd function of the angle $\phi$. 

We note that the trispectrum induced from one $\theta$ insertion is purely imaginary. This is a notable fact due to P-violation. While the scalar correlation function in position space is manifestly real, its counterpart in momentum space is in general not. Because
\begin{equation}
	\left\langle\prod_j^n\varphi(\vec{x}_j)\right\rangle=\int_{\{\vec{k}_j\}}e^{i\sum_j^n \vec{k}_j\cdot\vec{x}_j}\left\langle\prod_j^n\varphi_{\vec{k}_j}\right\rangle=\left\langle\prod_j^n\varphi(\vec{x}_j)\right\rangle^*
\end{equation}
leads to
\begin{equation}\label{NptPrelation}
	\left\langle\prod_j^n\varphi_{-\vec{k}_j}\right\rangle=\left\langle\prod_j^n\varphi_{\vec{k}_j}\right\rangle^*~.
\end{equation}
For $n<4$, we can use spatial rotations (if there is rotational symmetry) to transform the left-hand side of (\ref{NptPrelation}) back to the original configuration, thereby establishing the reality. However, for $n\geqslant 4$, P-violation leads to $\left\langle\prod_j\varphi_{-\vec{k}_j}\right\rangle \neq \left\langle\prod_j\varphi_{\vec{k}_j}\right\rangle$. This will give rise to the imaginary part of the 4-point correlation function in momentum space. FIG.~\ref{phiDep} shows the imaginary part of the dimensionless trispectrum with respect to $\phi$ for different momentum configurations. The shape dependence on the angle $\phi$ is in accordance with our expectation from intuition. For example, in the middle panel where $k_1=k_2$, we anticipate the behavior $|\Im \tilde{T}^{PT}(\phi=\delta)|=|\Im \tilde{T}^{PT}(\phi=\pi-\delta)|$ because of rotational symmetry.
\begin{figure}[h!]
	\centering
	\nonumber\includegraphics[width=17cm]{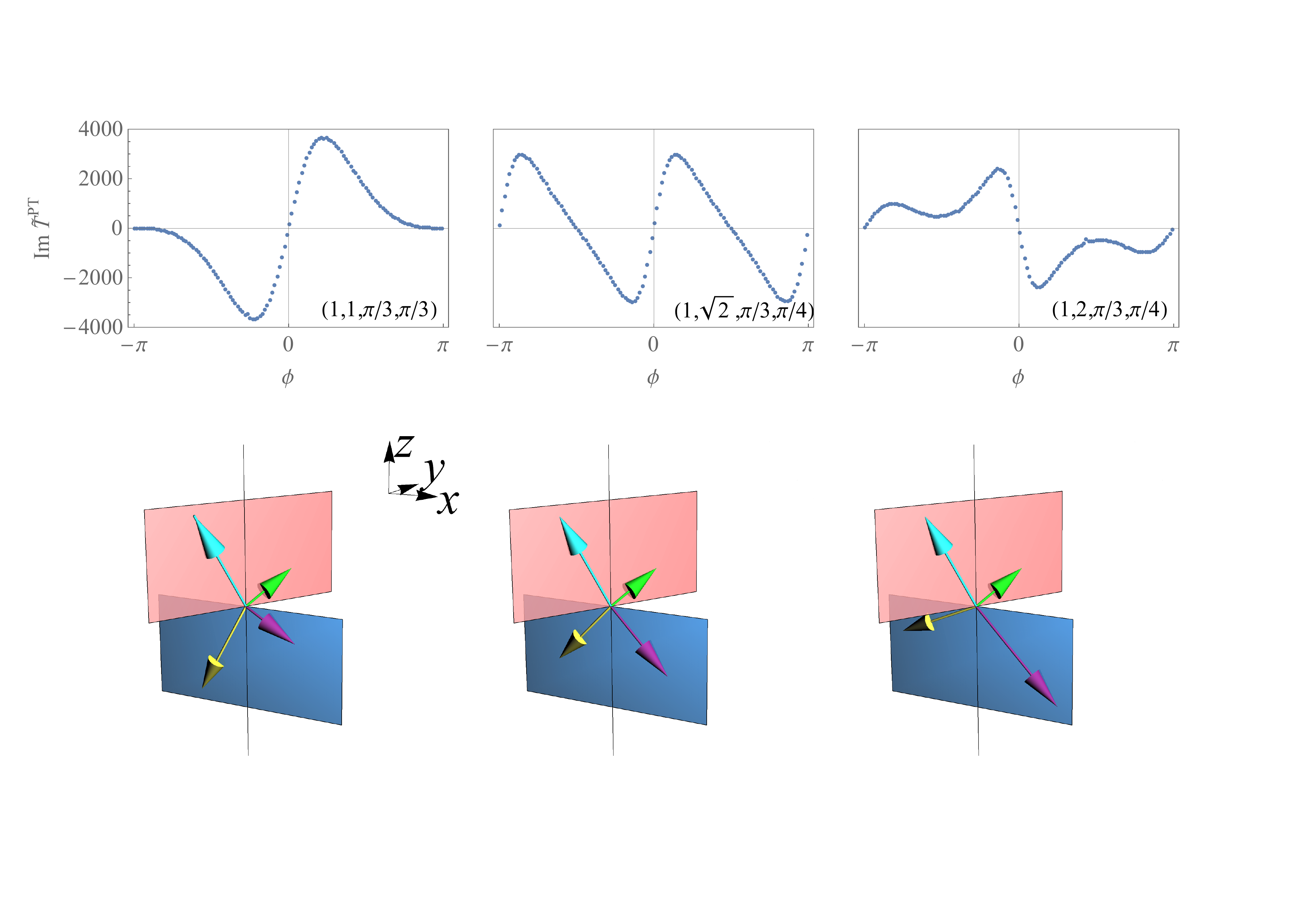}\\
	\caption{The perturbatively computed dimensionless trispectrum divided by couplings $\tilde{T}^{PT}\equiv T^{PT}/c(\frac{\rho_2}{H})^2(\frac{\rho_{1,Z}}{\dot{\phi}_0})^2$ as a function of $\phi$ with different momentum configurations. Left panel: $k_1=q_1=1,p=2,\theta_k=\theta_q=\pi/3$, middle panel: $k_1=1,q_1=\sqrt{2},p=2,\theta_k=\pi/3,\theta_q=\pi/4$, right panel: $k_1=1,q_1=p=2,\theta_k=\pi/3,\theta_q=\pi/4$. The masses are chosen as $\mu_h=0.3$, $\mu_Z=0.2$, which correspond to $m_h=1.53H, m_Z=0.54H$.}\label{phiDep}
\end{figure}
In this light-field case, $\tau_{NL}\sim \mathcal{O}(10)$ if the couplings are chosen as $c\sim 0.1,\rho_2/H,\rho_{1,Z}/\dot{\phi}_0\sim 0.2$. When the mass of the fields increases, the decrease in $\tau_{NL}$ is significant (see Sect.~\ref{LargeMassEFT}).

\subsection{Partially non-perturbative treatment of the $\theta$ term}\label{EOMmethod}

The calculation above is for a single $\theta$-term insertion. This is essentially the leading P- and CP-violating term of a perturbative expansion in terms of $c\ll 1$, where $c$ is defined in (\ref{cdef}). However, it is physically allowed to have $c\sim\order{1}$, where the perturbative expansion in small $c$ is no longer valid. In this case we should treat the $\theta$-term non-perturbatively. The way to keep contributions to all orders in $c$ is to derive the EOM for the gauge field by including the $\theta$-term. The resulting equation is still linear and has an analytical solution. Then we can use the mode function from this equation to compute the trispectrum, which effectively includes contributions with arbitrary number of $\theta$-term insertions. We illustrate this resummation as below.
\begin{equation}
	\begin{gathered}
	\includegraphics[width=3cm]{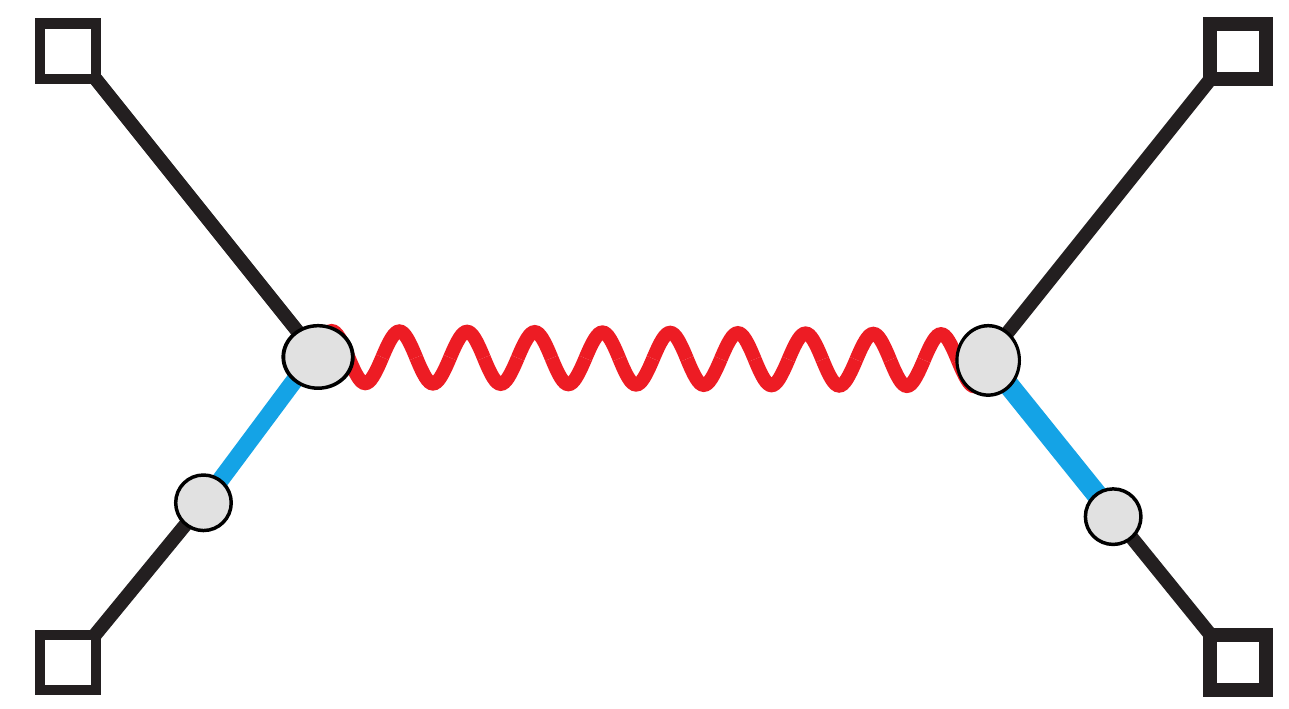}
	\end{gathered}=\begin{gathered}
	\includegraphics[width=3cm]{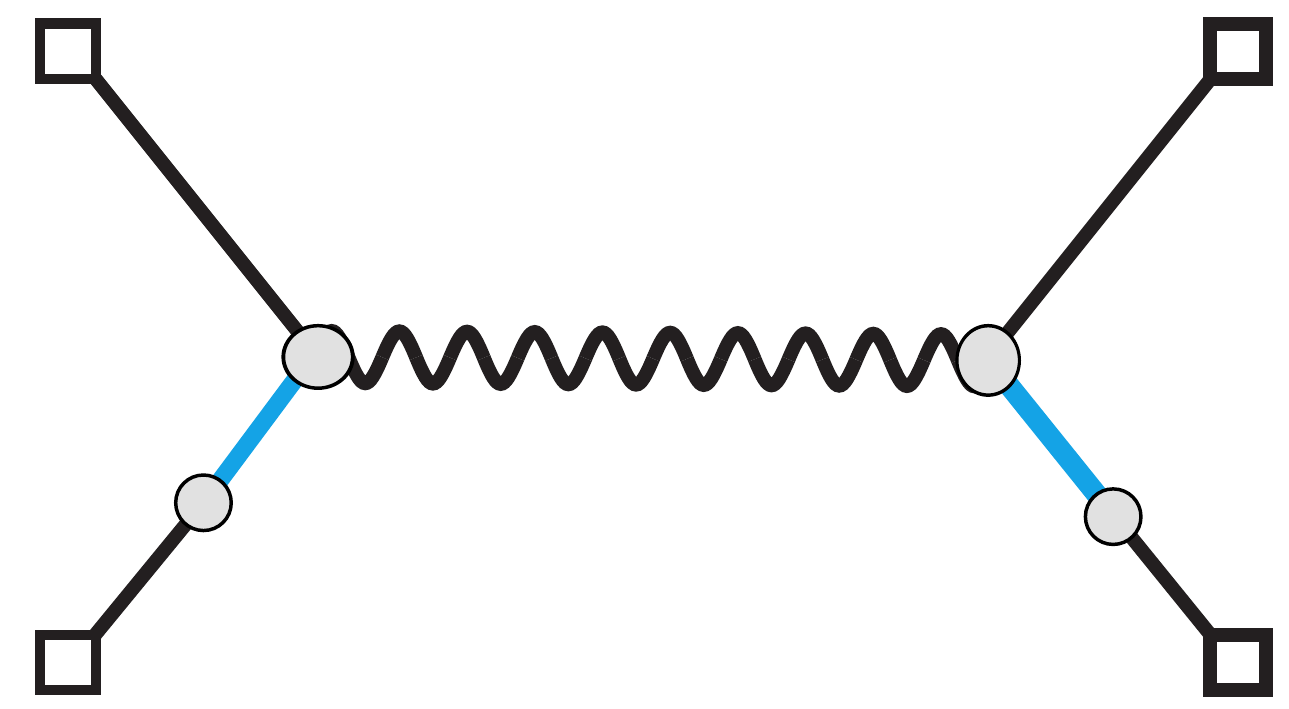}
	\end{gathered}+\begin{gathered}
	\includegraphics[width=3cm]{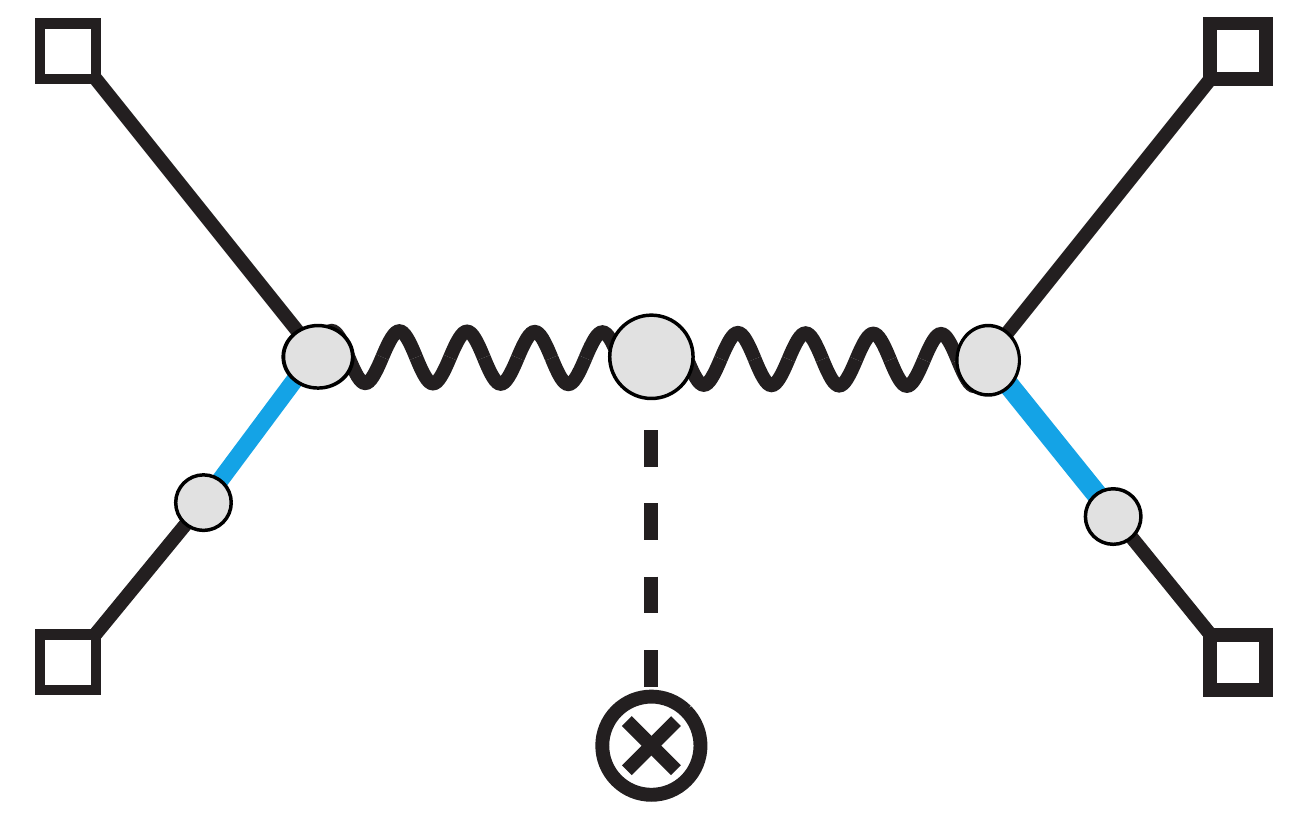}
	\end{gathered}+\begin{gathered}
	\includegraphics[width=3cm]{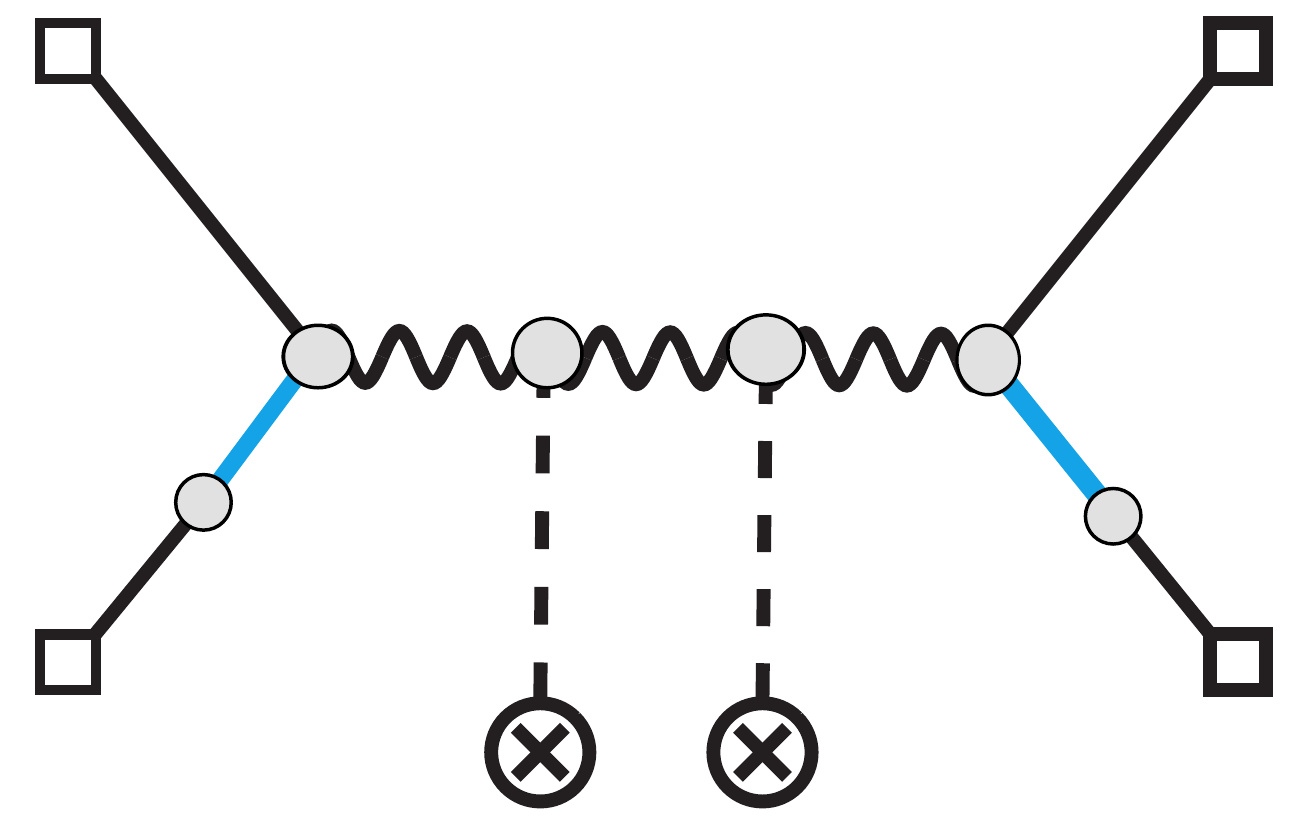}
	\end{gathered}+\cdots~.
\end{equation}

We could have used this method from the very beginning. But a single $\theta$-insertion is still useful because it is in direct analogy to one-body decay in particle physics. Furthermore, we can see explicitly the appearance of the Levi-Civita symbol, both in the original Lagrangian and in the final result. As we shall see below, the Levi-Civita dependence in our non-perturbative treatment is no longer manifest in intermediate steps, but the P-odd angular dependence in the final result still persists.

The EOM of the vector field $Z$ is obtained by varying the quadratic action,
\begin{equation}\label{ZEOMindS}
	\partial_\mu Z^{\mu\sigma}-m_Z^2 a^2 Z^\sigma=-c_0\partial_\rho \theta \epsilon^{\mu\nu\rho\sigma}Z_{\mu\nu}~.
\end{equation}
Notice that the indices are raised using $\eta^{\mu\nu}$, as will be for this whole subsection. In the unitary gauge, $Z$ boson behaves as a Proca field with a second-class constraint found by taking the divergence of (\ref{ZEOMindS}),
\begin{equation}\label{ProcaConstraint}
	\partial_\sigma(a^2 Z^\sigma)=0~,
\end{equation}
which becomes $2HZ_0=\partial_\sigma Z^\sigma$ in dS. Therefore $Z_0$ has no dynamics and must be solved from the dynamics of the longitudinal component. Since we are interested in the effects brought by the $\theta$ term, we neglect the longitudinal dynamics and write the Fourier-space EOM for the transverse components $Z^\bot_i \equiv (\delta_{ij}-\partial_i\partial_j/\partial_k^2)Z_j$ as,
\begin{equation}
	\left[\partial_\tau^2+(p^2+m_Z^2a^2)\right]Z_i^\bot(\vec{p})=2ic_0a\theta'\epsilon_{ijk} p_j Z_k^\bot(\vec{p})~.
\end{equation}
We choose circular polarizations as the basis to diagonalize the EOM, namely $Z_i^\bot=\sum_{\lambda=\pm}\epsilon^\lambda_i(\hat{p})v_p^\lambda(\tau)$ and $\vec{\epsilon}^{~(\pm)}=(\vec{\epsilon}^{~(1)}\pm i\vec{\epsilon}^{~(2)})/\sqrt{2}$, where $\hat{p}$ stands for the unit vector in the $\vec{p}$ direction. With the inflationary background, the EOM reads
\begin{equation}
	v_p^{(\pm)\prime\prime}+\left(p^2\mp\frac{2p c_0\dot{\theta}}{H\tau}+\frac{m_Z^2}{H^2\tau^2}\right)v_p^{(\pm)}=0~.
\end{equation}
The second term in the bracket comes from the CP-violating $\theta$ term and acts as a chemical potential favoring one polarization over the other and thus produces a left-right imbalance. As mentioned above, this is the source of P-violation in the four-scalar final state. The solutions to the EOM are Whittaker functions, which under the Bunch-Davies initial condition become
\begin{equation}
	v_p^{(\pm)}=\frac{1}{\sqrt{2p}}2^{\mp ic}e^{\mp\pi c/2}W(\pm ic,i\mu_Z,2ip\tau)\xrightarrow{\tau\rightarrow-\infty}\frac{1}{\sqrt{2p}}e^{-ip\tau}(-p\tau)^{\pm i c}~,
\end{equation}
From this mode function we obtain four Schwinger-Keldysh propagators,
\begin{subequations}\label{VecProp}
	\begin{eqnarray}
	D_{i_1 i_2}^{+-}(\vec{p},\tau_1,\tau_2)&=&\sum_{\lambda=\pm}\left[\epsilon_{i_1}^\lambda(-\hat{p}) v_p^\lambda(\tau_1)\right]^*\epsilon_{i_2}^\lambda(-\hat{p}) v_p^\lambda(\tau_2)~~~\\
	D_{i_1 i_2}^{-+}(\vec{p},\tau_1,\tau_2)&=&\sum_{\lambda=\pm}\epsilon_{i_1}^\lambda(\hat{p}) v_p^\lambda(\tau_1)\left[\epsilon_{i_2}^\lambda(\hat{p}) v_p^\lambda(\tau_2)\right]^*~~~\\
	D_{i_1 i_2}^{++}(\vec{p},\tau_1,\tau_2)&=&\Theta(\tau_1-\tau_2)D_{i_1 i_2}^{-+}(\vec{p},\tau_1,\tau_2)+\Theta(\tau_2-\tau_1)D_{i_1 i_2}^{+-}(\vec{p},\tau_1,\tau_2)\\
	D_{i_1 i_2}^{--}(\vec{p},\tau_1,\tau_2)&=&\Theta(\tau_1-\tau_2)D_{i_1 i_2}^{+-}(\vec{p},\tau_1,\tau_2)+\Theta(\tau_2-\tau_1)D_{i_1 i_2}^{-+}(\vec{p},\tau_1,\tau_2)~.
	\end{eqnarray}
\end{subequations}
As a consistency check, the propagators are invariant under $SO(2)$ little group transformations $\epsilon_{i_1}^\lambda\rightarrow e^{i\lambda\theta}\epsilon_{i_1}^\lambda$, $\epsilon_{i_2}^{\lambda*}\rightarrow e^{-i\lambda\theta}\epsilon_{i_2}^{\lambda*}$. However, notice that now with P-violation, we have
\begin{equation}
	D_{i_1 i_2}^{+-}(\vec{p},\tau_1,\tau_2)=\Big[D_{i_1 i_2}^{-+}(\vec{p},\tau_1,\tau_2)^*\Big]_{\vec{p}\rightarrow-\vec{p}}\neq D_{i_1 i_2}^{-+}(\vec{p},\tau_1,\tau_2)^*~,
\end{equation}
because the mode functions for two polarizations behave differently and $v_p^{(+)}\neq v_p^{(-)}$. Hence the vector propagator distinguishes two polarizations, which in turn will be imprinted on the final states. 

With the effects of the $\theta$ term non-perturbatively encoded in the propagators in (\ref{VecProp}), we can calculate the trispectrum by a simple exchange diagram. The current insertion vertex requires a contraction between polarization vectors and the corresponding three-momenta. To proceed, we build the polarization vectors through a Gram-Schmidt procedure,
\begin{equation}
	\vec{\epsilon}^{~(\pm)}=\frac{1}{\sqrt{2(1-(\hat{n}\cdot\hat{p})^2)}}\left[(\hat{n}-(\hat{n}\cdot\hat{p})\hat{p})\pm i(\hat{p}\times\hat{n})\right]~,
\end{equation}
where $\hat{n}$ is a random-directional unit vector different from $\hat{p}$. The momentum contraction involves the following expression:
\begin{equation}
	\left(\vec{q}_1\cdot\vec{\epsilon}^{~(\pm)}(\hat{p})\right)\left(\vec{k}_1\cdot\vec{\epsilon}^{~(\pm)}(\hat{p})\right)^*=\left(\vec{q}_1\cdot\vec{\epsilon}^{~(\pm)}(-\hat{p})\right)^*\left(\vec{k}_1\cdot\vec{\epsilon}^{~(\pm)}(-\hat{p})\right)~.
\end{equation}
Since we have checked the little group invariance of the vector propagators, we can choose whatever $\hat{n}$ that simplifies calculation without affecting the final result. Setting $\hat{n}=\hat{q}_1$ gives
\begin{equation}
	\left(\vec{q}_1\cdot\vec{\epsilon}^{~(\pm)}(\hat{p})\right)\left(\vec{k}_1\cdot\vec{\epsilon}^{~(\pm)}(\hat{p})\right)^*=\frac{1}{2}\left[\vec{q}_1\cdot\vec{k}_1-(\vec{q}_1\cdot\hat{p})(\vec{k}_1\cdot\hat{p})\mp i\hat{p}\cdot(\vec{q}_1\times\vec{k}_1)\right]~.
\end{equation}
The P-odd pattern appears again in the last term. If the $\theta$ term were absent, the mode functions for two polarizations would be the same and would lead to a cancellation of this pattern, leaving a real trispectrum without P-violation.

The 4-point function is now computed easily as a single tree-level exchange diagram,
\begin{eqnarray}
\nonumber\left\langle\varphi_{\vec{q}_1}\varphi_{\vec{q}_2}\varphi_{\vec{k}_1}\varphi_{\vec{k}_2}\right\rangle'&=&-\left(\frac{\rho_{1,Z}}{\dot{\phi}_0}\right)^2\rho_2^2\sum_{\{\epsilon_i\}=\pm}\epsilon_1\epsilon_2\epsilon\epsilon'\int d\tau d\tau_1 d\tau_2 d\tau' a(\tau)^3a(\tau_1)^2a(\tau_2)^2a(\tau')^3\\
\nonumber&&~~~~~~~~~~~~~~~~~~~~~~~~~~~~~~~\times G_{\varphi;\epsilon_1}(q_1,\tau_1)G_{h;\epsilon_1\epsilon}(q_2,\tau_1,\tau)\partial_\tau G_{\varphi;\epsilon}(q_2,\tau)\\
\nonumber&&~~~~~~~~~~~~~~~~~~~~~~~~~~~~~~~\times q_{1i_1}D_{i_1 i_2}^{\epsilon_1\epsilon_2}(\vec{p},\tau_1,\tau_2)k_{1i_2}\\
\nonumber&&~~~~~~~~~~~~~~~~~~~~~~~~~~~~~~~\times G_{\varphi;\epsilon_2}(k_1,\tau_2)G_{h;\epsilon_2\epsilon'}(k_2,\tau_2,\tau')\partial_\tau' G_{\varphi;\epsilon'}(k_2,\tau')\\
&&+(\text{23 perms})~.
\end{eqnarray}
Afterwards, calculations are standard and the final trispectrum normalized according to (\ref{TrispecNorm}) is shown in FIG.~\ref{phiDepNonpert}. As is clear from the figure, in the presence of the $\theta$ term, the trispectrum induced by the transverse components of the vector boson develops an imaginary part that behaves as an odd function of $\phi$ around the planar configuration. In contrast, the real part of the trispectrum is an even function in $\phi$. The physical explanation for this is very clear. Planar momentum configurations are even under P transformation, and therefore cannot possess an imaginary part\footnote{Notice that this is true only when spatial rotational symmetry is preserved. If there exists a special direction, planar configurations can also have nonzero imaginary parts (see \cite{Shiraishi:2016mok}). Here we deem manifest rotational symmetry as a more natural choice and will only consider this case hereafter.}, for the same reason as why in Chemistry, planar molecules generally have no enantiomers. To check the consistency, a comparison with the leading-order perturbation theory results in the previous subsection is shown in FIG.~\ref{PertVSNonpert}. Clearly, in the perturbative regime, these two methods agree with each other very well. In the partially non-perturbative regime, $e.g.$, $c=0.6$, $\Im(c\tilde{T}^{PT})$ mismatches $\Im\tilde{T}^{EOM}$ by a numerical factor.

\begin{figure}[h!]
	\includegraphics[width=17cm]{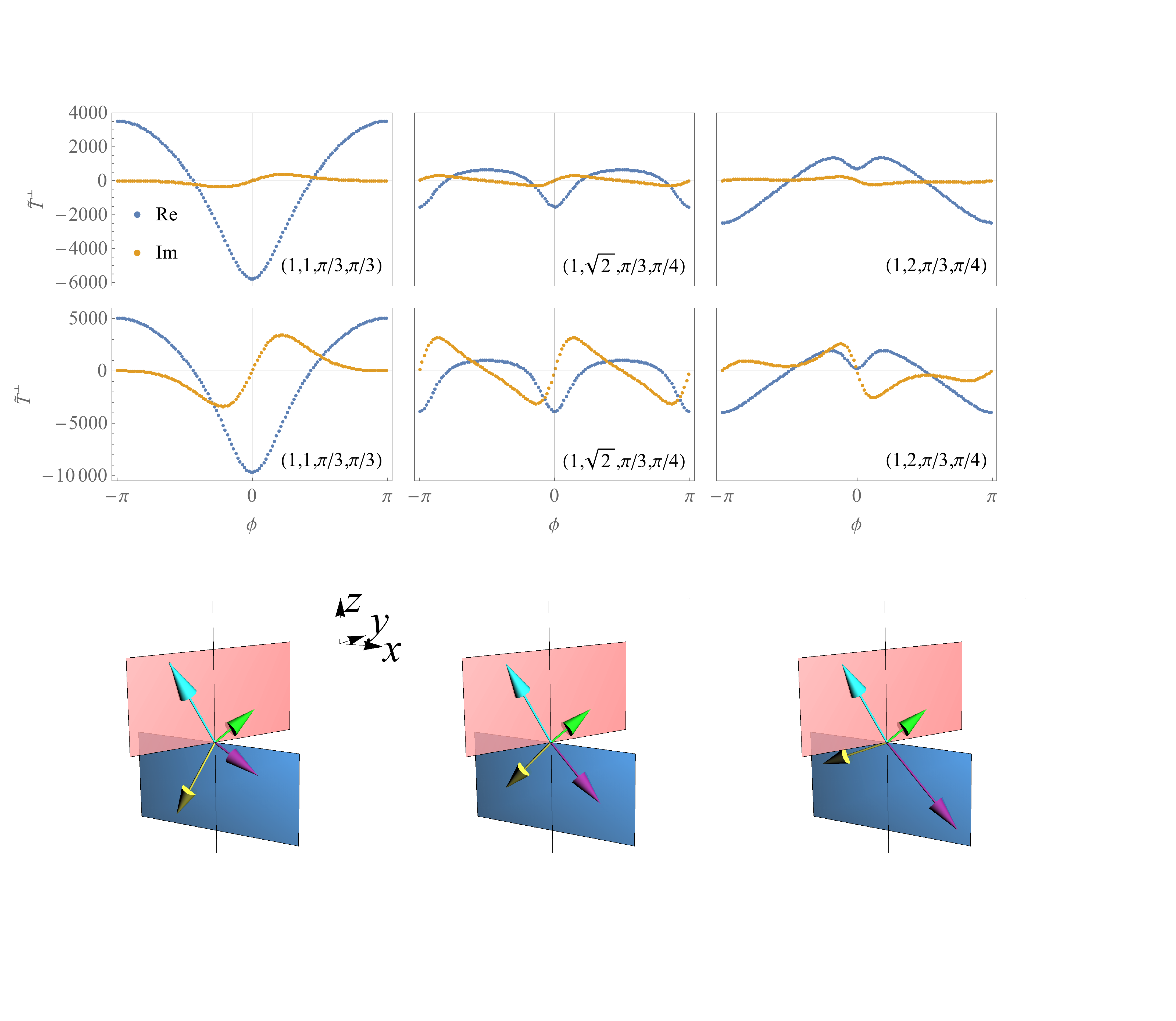}
	\caption{The dimensionless trispectrum divided by couplings $\tilde{T}^\bot\equiv T^\bot/(\frac{\rho_2}{H})^2(\frac{\rho_{1,Z}}{\dot{\phi}_0})^2$ as a function of $\phi$ with different momentum configuration. Left panel: $k_1=q_1=1,p=2,\theta_k=\theta_q=\pi/3$, middle panel: $k_1=1,q_1=\sqrt{2},p=2,\theta_k=\pi/3,\theta_q=\pi/4$, right panel: $k_1=1,q_1=p=2,\theta_k=\pi/3,\theta_q=\pi/4$. Here we have taken $c=0.1$ (perturbative in $c$) for the first line and $c=0.6$ (non-perturbative in $c$, marginal in loops) for the second line. The masses are chosen as $\mu_h=0.3$, $\mu_Z=0.2$, which correspond to $m_h=1.53H, m_Z=0.54H$. In these cases, $\Im\tau_{NL}\sim \mathcal{O}(10^{2})$ if the couplings are all near $0.2$.}\label{phiDepNonpert}
\end{figure}

\begin{figure}[h!]
	\centering
	\includegraphics[width=15cm]{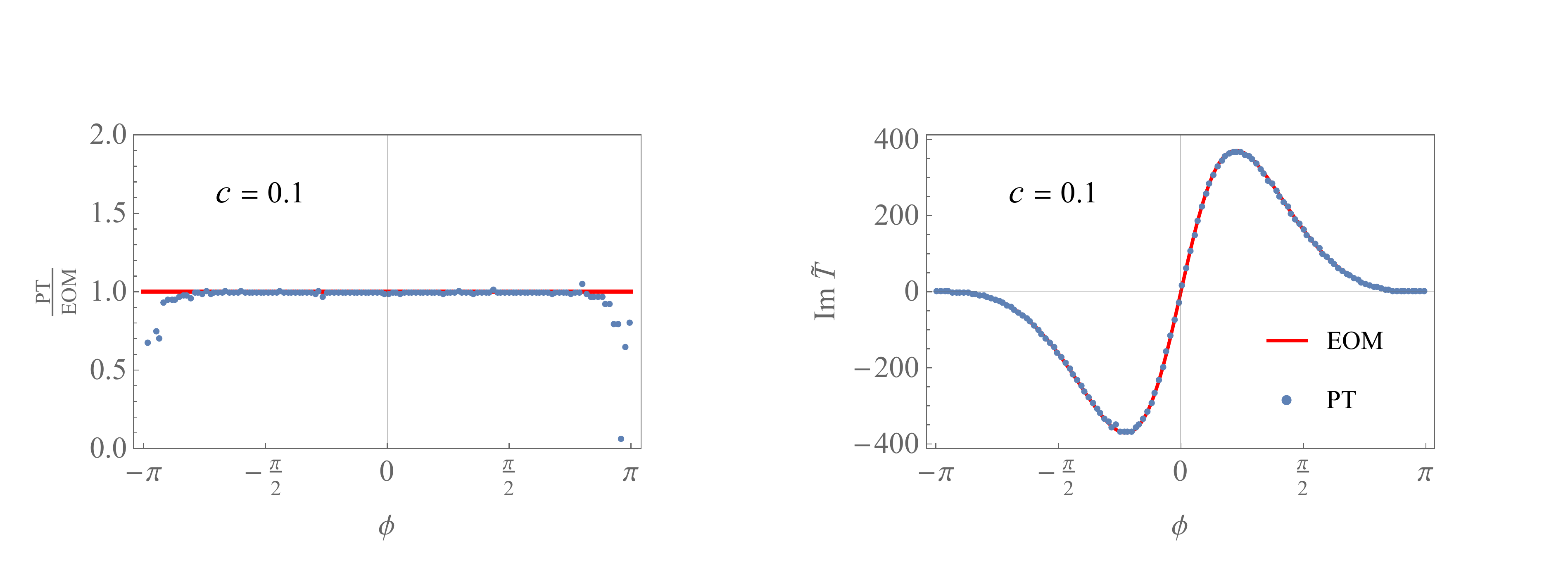}\\
	\includegraphics[width=15cm]{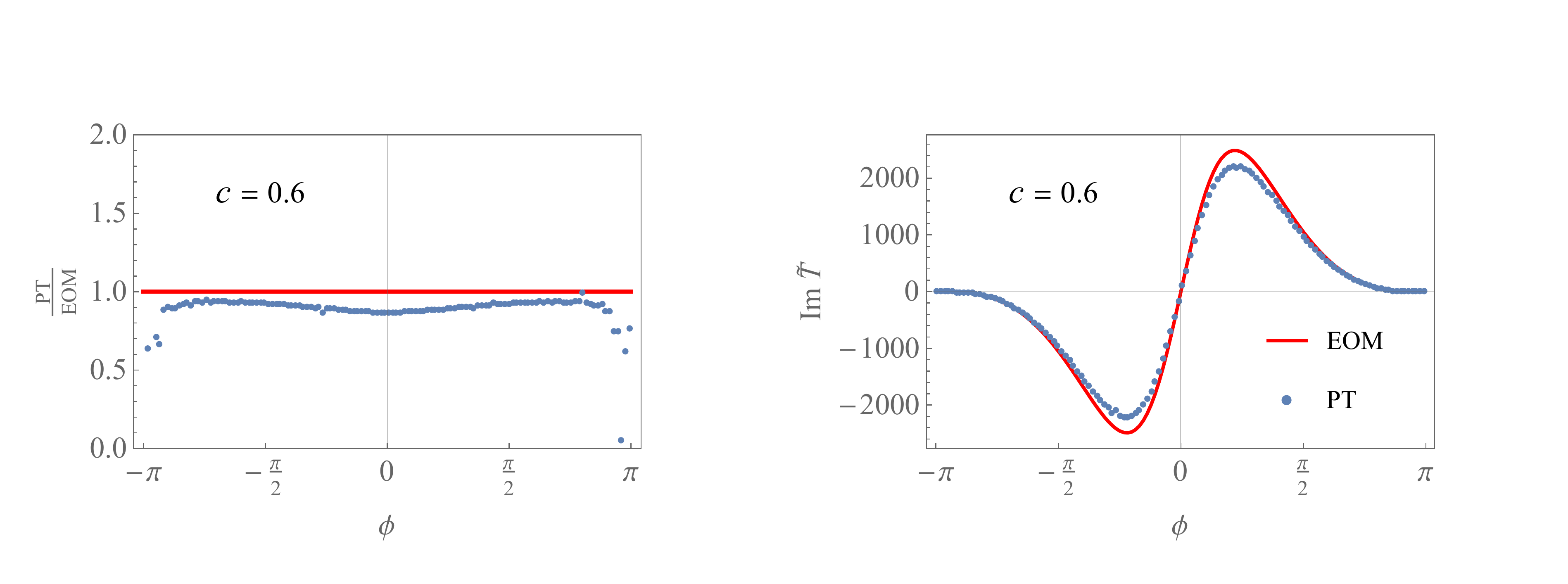}
	\caption{A numerical check on the consistency. In the left panel, we plot the ratio of perturbative results obtained in Sect.~\ref{1PTmethod} against the non-perturbative results obtained using EOM in Sect.~\ref{EOMmethod}. In the right panel, we show the imaginary part of the trispectrum on the same plot for a direct comparison. The first row corresponds to $c=0.1$ while the second row corresponds to $c=0.6$. The parameters are chosen as $k_1=q_1=1,p=2,\theta_k=\theta_q=\pi/3$ and $\mu_h=0.3,\mu_Z=0.2$, which correspond to $m_h=1.53H, m_Z=0.54H$. For $c=0.1$, within most regions, the error is acceptable by $\lesssim c=10\%$, validating perturbation theory. Near the ends,  numerical uncertainties overcome the systematic deviation predicted by perturbation theory, since $\Im \tilde{T}\sim 0$. For $c=0.6$, the two methods approximately mismatch by a numerical factor.}\label{PertVSNonpert}
\end{figure}

When chemical potential is large, namely $c\gtrsim 1$, the production rate of gauge boson is dramatically amplified. This can be seen from the IR expansion of the gauge field mode function:
\begin{equation}\label{IRexpansion}
	v_p^{(\pm)}\xrightarrow{\tau\rightarrow 0}\alpha_\pm \frac{C}{\sqrt{2\mu_Z}}(-\tau)^{\frac{1}{2}+i\mu_Z}+\beta_\pm \frac{C^*}{\sqrt{2\mu_Z}}(-\tau)^{\frac{1}{2}-i\mu_Z}~,
\end{equation}
where $C=e^{i(\mu_Z\ln (2p)-\pi/4)}$ is a pure phase and
\begin{equation}
	\alpha_\pm=2^{\mp i c} e^{-\frac{1}{2} \pi(\pm c-\mu_Z)} \frac{\sqrt{2\mu_Z }~\Gamma (-2 i \mu_Z )}{\Gamma \left(\frac{1}{2}-i \mu_Z\mp i
		c\right)},~\beta_\pm=-i 2^{\mp i c} e^{-\frac{1}{2} \pi  (\pm c+\mu_Z )} \frac{\sqrt{2\mu_Z }~\Gamma (2 i \mu_Z )}{\Gamma \left(\frac{1}{2}+i \mu_Z\mp i c\right)}
\end{equation}
are the Bogolyubov coefficients. The particle number density in the momentum space is given by
\begin{equation}
	\langle n_{\vec{p}}^{(\pm)}\rangle'=|\beta_\pm|^2=\frac{1}{r_\pm e^{2\pi\mu_Z}-1},~~~\text{with}~~~ r_\pm=\frac{\cosh[\pi(\pm c+\mu_Z)]}{\cosh[\pi(\pm c-\mu_Z)]}~.
\end{equation}
For a large positive $c$, $r_\pm\rightarrow e^{\pm 2\pi\mu_Z}$, leading to an exponentially enhanced production of negatively polarized gauge field particles, $i.e.$, $\langle n_{\vec{p}}^{(-)}\rangle'\propto e^{2\pi c}$. This exponential growth in particle number density could threaten the inflation background. To check this, we compute the energy density of the produced gauge bosons,
	\begin{equation}
	\langle T_{\mu\nu}(Z)\rangle=-\frac{2}{\sqrt{-g}}\left\langle\frac{\delta S_2[Z]}{\delta g^{\mu\nu}}\right\rangle=\left\langle Z_{\mu\rho}Z_\nu^{~\rho}-\frac{1}{4}g_{\mu\nu}Z_{\rho\sigma}Z^{\rho\sigma}+m_Z^2 Z_\mu Z_\nu-\frac{1}{2}g_{\mu\nu}m_Z^2 Z_\rho Z^\rho\right\rangle~.
	\end{equation}
	Interestingly, the $\theta$ term does not contribute to $T_{\mu\nu}(Z)$ at all because of its ignorance to the geometry of spacetime. The physical energy density is given as $\varepsilon_Z=\langle T_{tt}\rangle=a^{-2}\langle T_{\tau\tau}\rangle$. Considering only the amplified transverse modes and using the mode expansion, we obtain the usual expression for vacuum energy contributed by transverse modes of $Z$,
	\begin{eqnarray}
	\nonumber\varepsilon_Z^\bot&=&a^{-4}\left\langle\frac{1}{2}(\partial_\tau Z_i^\bot)^2+\frac{1}{4}(\partial_i Z_j^\bot-\partial_j Z_i^\bot)^2+\frac{1}{2}m_Z^2 a^2 Z_i^{\bot 2}\right\rangle\\
	&=&\sum_{\lambda=\pm}\int_{\vec{p}}\frac{1}{2 a^4}\left(|v_p^{\lambda\prime}|^2+(p^2+m_Z^2 a^2)|v_p^\lambda|^2\right)~.
	\end{eqnarray}
	The momentum integral is quartically divergent in the UV, as is in flat spacetime. This formally infinite contribution to the energy density by the vacuum fluctuations is always present and we assume it is canceled by a shift in the height of the inflaton potential. Thus we only need to care about the contribution by real particle production. We cut off the momentum integral at horizon scale $p<-\tau^{-1}$ and use the IR expansion (\ref{IRexpansion}) to obtain
	\begin{equation}
	\varepsilon_Z^\bot(\tau)\approx\sum_{\lambda=\pm}\int_{|\vec{p}_{\text{ph}}|<H}\left(m_Z+\frac{p_{\text{ph}}^2}{2m_Z}\right)\left(|\beta_\lambda|^2+\frac{1}{2}\right)~.
	\end{equation}
	As a result, the energy density in the gauge field sector is approximately given by the rest energy and non-relativistic kinetic energy of the produced real particles. Notice that the $1/2$ term in the last bracket is the remaining (finite) vacuum energy in the IR. 
	 Since now $\langle n_{\vec{p}}^{(-)}\rangle'=|\beta_-|^2\propto e^{2\pi c}$, the constraint on energy density becomes
	\begin{equation}\label{energybound}
	\varepsilon_Z^\bot(\tau)\sim H^3 m_Z e^{2\pi c}<-M_p^2\dot{H}= \frac{1}{2}\dot{\phi}_0^2 ~~\Rightarrow~~ c< \frac{1}{2\pi}\ln\frac{\dot{\phi}_0^2/H^4}{2m_Z/H}~.
	\end{equation}
We see that with $\dot{\phi}_0^2/H^4\sim 3600$, for $\mu_Z\sim 0.2$, or $m_Z/H\sim 0.54$, the chemical potential should satisfy $c\lesssim1.29$ to keep the stability of inflation background. However, even when (\ref{energybound}) is satisfied, the calculation in this subsection may still be inapplicable due to the break-down of loop expansion\footnote{Loop expansion is only a way of organizing calculation. What really makes perturbation theory valid is the smallness of $all$ coupling constants.}. If the particles running the loop include the highly amplified transverse modes of $Z$, the loop corrections may exceed the tree-level propagator (\ref{VecProp}) and perturbation theory becomes invalid. In the gauge-scalar sector, there are interactions between Higgs and $Z$ boson that contribute to the self-energy of the $Z$ boson at one-loop level, which goes like
	\begin{equation}
	\begin{gathered}
	\includegraphics[width=4cm]{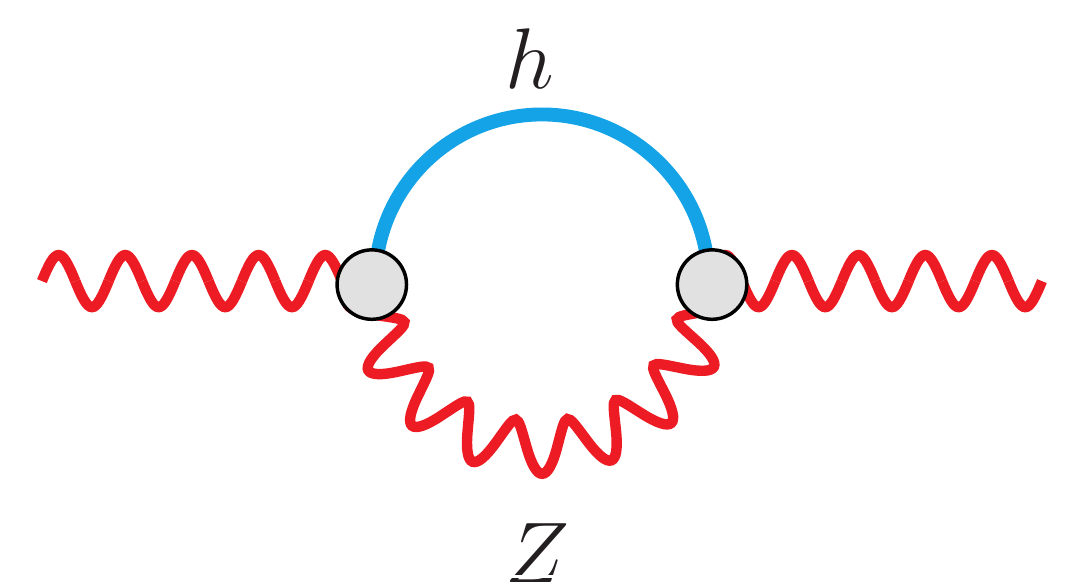}
	\end{gathered}\sim \frac{g^2}{(4\pi)^2}\times\frac{m_Z^2}{m_h^2}e^{4\pi c}\times\left(\begin{gathered}
	\includegraphics[width=2.5cm]{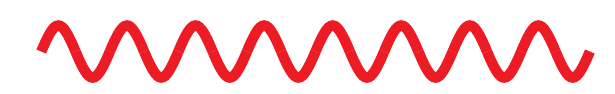}
	\end{gathered}\right)~.
	\end{equation}
	Here the coupling $g m_Z/\cos\theta_W\sim g m_Z$, the loop factor $1/(4\pi)^2$ and the EFT leading order $G_h\sim 1/m_h^2$ have been taken into account. To ensure the validity of loop expansion, we require
	\begin{equation}\label{PTbound}
	\frac{g^2}{(4\pi)^2}\times\frac{m_Z^2}{m_h^2}e^{4\pi c}\lesssim 1~\Rightarrow~c \lesssim\frac{1}{2\pi}\ln\frac{4\pi m_h}{g m_Z}~.
	\end{equation}
	For the parameter choice in FIG.~\ref{phiDepNonpert}, with $g=0.55$, the constraint gives $c\lesssim 0.66$, which suggests a marginally perturbative choice $c=0.6$. We comment that for an SM setup, the $\theta$ coefficient before $Z\tilde{Z}$ is the $SU(2)_L$ axion and hence will also appear before $W^\pm\tilde{W}^\mp$. Due to the non-linear coupling between $W^\pm$ and $Z$, the constraint from loop expansion will be even tighter. Whatsoever, the CP-breaking pattern in the trispectrum that we are concerned with still persists.
	
We stress that the bound (\ref{PTbound}) is from the validity of the perturbative loop expansion. The parameter space violating (\ref{PTbound}) can still be physically conceivable, as long as the back-reaction to the energy budget of the inflationary universe remains small, $i.e.$, when (\ref{energybound}) is satisfied. In such a case, the failure of loop expansion is related to the fact that the copious production of real particles forms a classical many-body system. Thus the loop expansion based on perturbative EFT around an approximate vacuum is no longer sufficient to capture the physics. Then the calculations in this subsection are no longer valid. We leave this possibility for future studies.

In addition to the above constraints from the consistency of our model itself, we need to check the observational constraints on the bispectrum. Since all the tree-level diagrams involving the vector field $Z$ contain the scalar-vector mixing vertex $\partial_\mu\varphi Z^\mu$, only the longitudinal component of $Z$ contributes. Therefore at tree-level, there is no exponentially enhanced diagram. At one-loop level, there do exist such potentially dangerous diagrams. We show the three leading diagrams in FIG.~\ref{bispectrumLoopConstr}. For $c=0.66$ and $\mu_Z=0.2$, the exponential factor $|\alpha_-|^4\sim 427$ tends to cancel the loop factor $\frac{1}{(4\pi)^2}\sim 0.0063$, leaving a tree-level-size result suppressed by couplings and masses. For example,
\begin{equation}
	\begin{gathered}
	\includegraphics[width=3cm]{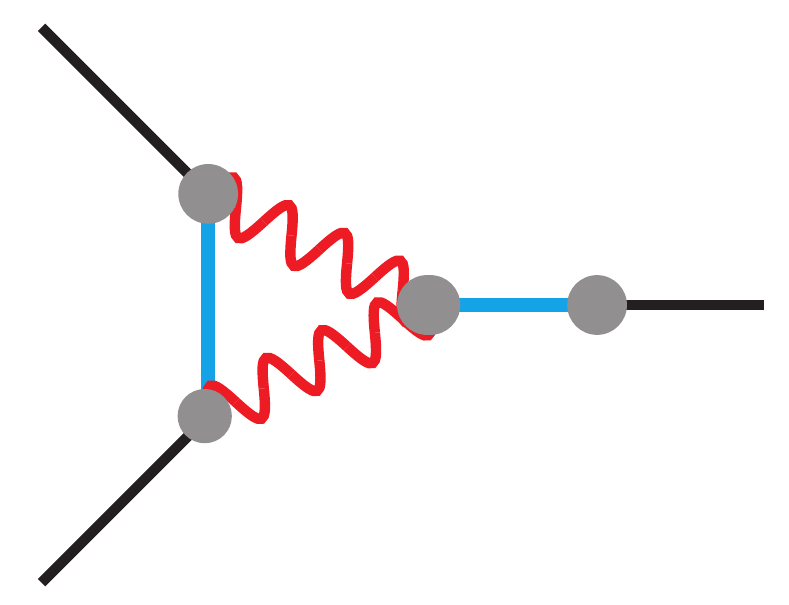}
	\end{gathered}\Rightarrow f_{NL}\sim \frac{\dot{\phi}_0}{H^2}\times|\alpha_-|^4\frac{1}{(4\pi)^2}\left(\frac{\rho_{1,Z}}{\dot{\phi}_0}\right)^2 \frac{g m_Z}{H}\times\frac{\rho_2}{H}\times \frac{H^4}{m_h^4}\approx 4.2
\end{equation}
for $c=0.66,\mu_h=0.3,\mu_Z=0.2,g=0.55,\frac{\rho_{1,Z}}{\dot{\phi}_0}=\frac{\rho_2}{H}=0.2$. Thus the final bispectrum is roughly $f_{NL}\lesssim\mathcal{O}(1)$ under the loop expansion bound (\ref{PTbound}). This still satisfies the current constraints of Planck 2018 \cite{Akrami:2019izv}, which gives $f_{NL}^{\text{local}}=-0.9\pm 5.1, f_{NL}^{\text{equil}}=-26\pm 47, f_{NL}^{\text{ortho}}=-38\pm 24,~(68\% \text{CL})$. For $c>0.66$, the system becomes fully non-perturbative and the naive estimations become invalid.

\begin{figure}[h!]
	\includegraphics[width=13cm]{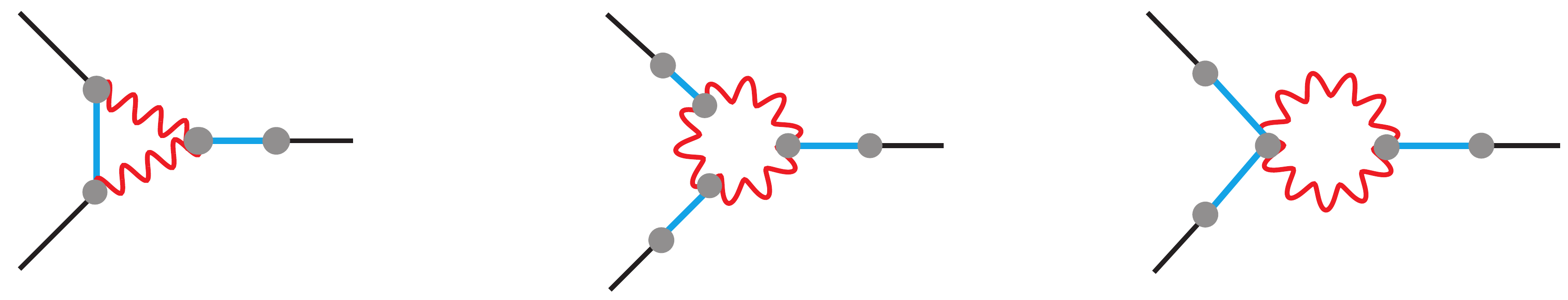}
	\caption{The three leading diagrams with enhancement that contribute to the bispectrum at one-loop level. Their contributions to $f_{NL}$ are estimated to be 4.2, 3.3 and 0.6 for $c=0.66,\mu_h=0.3,\mu_Z=0.2,g=0.55$, with all other couplings near 0.2.}\label{bispectrumLoopConstr}
\end{figure}

Finally, we comment that the real part of the trispectrum is also important for inferring the mass of the vector field $Z$. If we observe the oscillations in the collapsed limit of the real part and the CP-odd imaginary part at the same time, together they can provide better constraints on the model parameters.


\subsection{A large mass EFT?}\label{LargeMassEFT}
In this section, we study the large mass EFT of our model. We will show that in exact dS spacetime, the P- and CP-violating signals cannot be seen at any order in the large mass expansion, therefore demonstrating the importance of on-shell real particle production. The large mass EFT is usually applicable when the massive fields mediating the interactions are heavy compared to $H$. In our study, we also require $c\ll \max\{1,\mu_Z\}$ to suppress the on-shell particle production, focusing on the single field description of the off-shell contributions of extra fields in dS.  For a heavy Higgs, integrating it out yields a change of inflaton sound speed at quadratic level due to the two-point mixing. Furthermore, the original current-potential interaction $\Delta\mathcal{L}_1$ becomes schematically
\begin{equation}
\Delta\mathcal{L}_1=\frac{\rho_{1,Z}}{\dot{\phi}_0}h\partial_\mu \varphi Z^\mu\rightarrow \frac{\rho_{1,Z}}{\dot{\phi}_0}\rho_2\partial_\mu \varphi Z^\mu\frac{1}{\Box-m_h^2}\dot{\varphi}\equiv J_\mu Z^\mu~.
\end{equation}
If the mass of the $Z$ boson is also large, we can integrate it out as well, yielding a current-current interaction from $\Delta\mathcal{L}_3$:
\begin{equation}
\Delta\mathcal{L}_3=-\frac{c_0\theta(t)}{4}Z_{\mu\nu}Z_{\rho\sigma}\mathcal{E}^{\mu\nu\rho\sigma}\rightarrow\Delta\mathcal{L}^\text{EFT}=-c_0\theta(t)\partial_\mu\left[\left(\frac{1}{\Box-m_Z^2}\right)_{\nu\alpha}J^\alpha\right]\partial_\rho\left[\left(\frac{1}{\Box-m_Z^2}\right)_{\sigma\beta}J^\beta\right]\mathcal{E}^{\mu\nu\rho\sigma}~.
\end{equation}
Thus we obtain the EFT Lagrangian by expanding the non-local propagators into an infinite gradient series,
\begin{eqnarray}\label{EFTExpansion}
\nonumber\Delta\mathcal{L}^\text{EFT}&=&-\left(\frac{\rho_{1,Z}}{\dot{\phi}_0}\rho_2\right)^2\frac{c_0\theta(t)}{ m_Z^4 m_h^4}\\
&&\times\mathcal{E}^{\mu\nu\rho\sigma}\sum_{m,n,p,q}\partial_\mu\left[\left(\frac{\Box}{m_Z^2}\right)^m\left(\partial_\nu \varphi\left(\frac{\Box}{m_h^2}\right)^n\dot{\varphi}\right)\right]\partial_\rho\left[\left(\frac{\Box}{m_Z^2}\right)^p\left(\partial_\sigma \varphi\left(\frac{\Box}{m_h^2}\right)^q\dot{\varphi}\right)\right]~.
\end{eqnarray}
Here we used $\eta_{\mu\nu}$ to replace the polarization sum since in the Feynman-diagram calculation only $\delta_{ij}$ coming from the transverse components contributes. Using the antisymmetry of $\epsilon^{\mu\nu\rho\sigma}$, it is easy to see that the first term of (\ref{EFTExpansion}) with $m=n=p=q=0$ vanishes. For simplicity, assuming $m_h\gg m_Z$, we obtain the leading-order (LO) and next-to-leading order (NLO) EFT operators
\begin{subequations}\label{EFTLOandNLO}
	\begin{eqnarray}
	\Delta\mathcal{L}^\text{LO}&=&-\left(\frac{\rho_{1,Z}}{\dot{\phi}_0}\rho_2\right)^2\frac{c_0\theta(t)}{ m_Z^6 m_h^4}\mathcal{E}^{\mu\nu\rho\sigma}2\partial_\mu\left[\Box\left(\partial_\nu \varphi\dot{\varphi}\right)\right]\partial_\rho\left[\partial_\sigma \varphi\dot{\varphi}\right]\\
	\Delta\mathcal{L}^\text{NLO}&=&-\left(\frac{\rho_{1,Z}}{\dot{\phi}_0}\rho_2\right)^2\frac{c_0\theta(t)}{ m_Z^8 m_h^4}\mathcal{E}^{\mu\nu\rho\sigma}\partial_\mu\left[\Box\left(\partial_\nu \varphi\dot{\varphi}\right)\right]\partial_\rho\left[\Box\left(\partial_\sigma \varphi\dot{\varphi}\right)\right]~.
	\end{eqnarray}
\end{subequations}

Consider the expansion of (\ref{EFTLOandNLO}) in a dS background. We first perform an integrate-by-parts (IBPs) to convert the derivative onto $\theta$.
Then by using the EOM, $\varphi''+2a H\varphi'-\partial_i^2\varphi=0$ and doing some IBPs, we obtain the on-shell effective operators
\begin{subequations}
	\begin{eqnarray}
	\sqrt{-g}\Delta\mathcal{L}^\text{LO}&=&\left(\frac{\rho_{1,Z}}{\dot{\phi}_0}\rho_2\right)^2\frac{c_0\dot\theta\epsilon^{ijk}}{ m_Z^6 m_h^4}a^{-3}\left(-4\varphi\partial_i\varphi'\partial_j\partial_l\varphi  \partial_k\partial_l\varphi'\right)\\
	\nonumber\sqrt{-g}\Delta\mathcal{L}^\text{NLO}&=&\left(\frac{\rho_{1,Z}}{\dot{\phi}_0}\rho_2\right)^2\frac{c_0\dot\theta\epsilon^{ijk}}{ m_Z^6 m_h^4}\Bigg\{\frac{H^2}{m_Z^2}a^{-3}\left(-8\varphi\partial_i\varphi'\partial_j\partial_l\varphi\partial_k\partial_l\varphi'\right)\\
	\nonumber&&~~~~~~~~~~~~~~~~~~~~~~~~~~~+\frac{H}{m_Z^2}a^{-4}\left[4  \varphi   \partial_i\partial_n\varphi  \
	\partial_j\partial_m\partial_n\varphi  \partial_k\partial_m\varphi'
	+ 4 \varphi    \partial_i\partial_n\varphi  \
	\partial_j\partial_m\partial_m\varphi  \partial_k\partial_n\varphi '\right]\\
	\nonumber&&~~~~~~~~~~~~~~~~~~~~~~~~~~~+\frac{1}{m_Z^2}a^{-5}\left[-8\partial_m\partial_m\varphi\partial_i\varphi'\partial_j\partial_n\varphi\partial_k\partial_n\varphi'+4\partial_n\varphi'\partial_i\partial_n\varphi\partial_j\partial_m\varphi\partial_k\partial_m\varphi'\right]\Bigg\}~.\\
	&&~~~~~~~~~~~~~~~~~~~~~~~~~~~
	\end{eqnarray}
\end{subequations}
The LO term and the NLO last term survive the flat spacetime limit $H\rightarrow 0$, $a\rightarrow 1$ and are thus present when the spacetime is not expanding. The terms proportional to powers of $H$ are due to the curved spacetime background and can be constructed independently by trading the derivatives on $\varphi$ with those on the spacetime metric. Note that these terms are not total derivatives and naively should contribute to the observables. However, as we shall see in the following, they do not show up in the 4-point function on the dS boundary.

As an explicit example, let us compute the effect of the LO term even before momentum permutation.
\begin{eqnarray}\label{LOnull}
\nonumber\left\langle\varphi_{\vec{q}_1}\varphi_{\vec{q}_2}\varphi_{\vec{k}_1}\varphi_{\vec{k}_2}\right\rangle'_{\text{LO}}&=&4\left(\frac{\rho_{1,Z}}{\dot{\phi}_0}\rho_2\right)^2\frac{c_0\dot\theta}{ m_Z^6 m_h^4}\vec{q}_2\cdot(\vec{k}_1\times\vec{k}_2)(\vec{k}_1\cdot\vec{k}_2)\\
\nonumber&&\times\Big[-i\times i^5\int_{-\infty}^0 d\tau a(\tau)^{-3}G_{\varphi,+}(q_1,\tau)G_{\varphi,+}^\prime(q_2,\tau)G_{\varphi,+}(k_1,\tau)G_{\varphi,+}^\prime(k_2,\tau)\\
\nonumber&&~~~+i\times (+i)^5\int_{-\infty}^0 d\tau a(\tau)^{-3}G_{\varphi,-}(q_1,\tau)G_{\varphi,-}^\prime(q_2,\tau)G_{\varphi,-}(k_1,\tau)G_{\varphi,-}^\prime(k_2,\tau)\Big]\\
\nonumber&=&4\left(\frac{\rho_{1,Z}}{\dot{\phi}_0}\rho_2\right)^2\frac{c H^{12}}{ m_Z^6 m_h^4}\frac{\vec{q}_2\cdot(\vec{k}_1\times\vec{k}_2)(\vec{k}_1\cdot\vec{k}_2)}{2q_1^3 2q_2 2k_1^3 2k_2}\\
\nonumber&&~~~\times 2i \Im \left[\int_{-\infty}^{0}d\tau \tau^5(1+i q_1\tau)(1+i k_1\tau)e^{-iK\tau}\right]\\
&=&0~.
\end{eqnarray}
The above expression vanishes because the time integral gives a real result. Therefore the LO term turns out to be unobservable in the trispectrum in dS. A similar calculation can be performed for the NLO term and we still get a null result because of the vanishing imaginary part of the time integral. These null results can also be viewed as a cancellation between the time-ordered diagram and the anti-time-ordered diagram. The cancellation is irrespective of the assumption $m_h\gg m_Z$ above and should be quite general. In fact, similar cancellation happens to the P-odd shape graviton bispectrum in exact dS spacetime \cite{Maldacena:2011nz,Soda:2011am,Shiraishi:2011st}. 

This phenomenon is more easily understood in the wavefunction formalism. The wavefunction of the universe in single-field EFT is given by
\begin{equation}
	\Psi[\varphi]=\mathcal{N}\exp\left(-\frac{1}{2}\int \psi_2\varphi^2-\frac{1}{4!}\int\psi_4\varphi^4+\cdots\right)~,
\end{equation}
where $\psi_N$'s are the dual correlators of the boundary CFT. In particular, the power spectrum and the trispectrum are given by the relations
\begin{eqnarray}
	\langle\varphi^2\rangle'&=&\frac{1}{2\Re^\prime\tilde{\psi}_2}\\
	\langle\varphi^4\rangle'&=&\frac{-2\Re'\tilde{\psi}_4}{(2\mathrm{Re}^\prime\tilde{\psi}_2)(2\mathrm{Re}^\prime\tilde{\psi}_2)(2\mathrm{Re}^\prime\tilde{\psi}_2)}~,
\end{eqnarray}
where $\tilde{\psi}_N$'s are written in momentum space and $\Re'$ is defined as $\Re'(\#)\equiv\frac{1}{2}(\#+\#^*|_{\vec{p}_n\to -\vec{p}_n})$. For a perturbative calculation, $\tilde{\psi}_4$ is essentially the time-ordered diagram in (\ref{LOnull}). However, if the time integral gives a real value, $\Re'\tilde{\psi}_4=\frac{1}{2}(\tilde{\psi}_4+\tilde{\psi}_4^*|_{\vec{p}_n\to -\vec{p}_n})=\frac{1}{2}(\tilde{\psi}_4-\tilde{\psi}_4)=0$. Thus even if $\tilde{\psi}_4\neq 0$, the trispectrum is still zero. Viewed another way, a P-odd real $\tilde{\psi}_4$ yields a purely imaginary $\psi_4$ in coordinate space, $i.e.$, $\psi_4=\pm i |\psi_4|$. This suggests that the wavefunction is modified by a pure phase due to the four-point interaction,
\begin{equation}
	\Psi[\varphi]=\mathcal{N}\exp\left(-\frac{i}{4!}\int\pm|\psi_4|\varphi^4\right)\exp\left(-\frac{1}{2}\int \psi_2\varphi^2+\cdots\right)~.
\end{equation}
This pure phase is eliminated when computing the expectation value of an observable \cite{Maldacena:2011nz}:
\begin{equation}
	\langle\mathcal{O}\rangle=\frac{\int \mathcal{D}\varphi |\Psi|^2 \mathcal{O}}{\int \mathcal{D}\varphi |\Psi|^2}~.
\end{equation}
As a result, to check whether we can obtain power-law suppressed CP-violating effects in the trispectrum, we only need to check whether $\tilde{\psi}_4$ (in other words, the time integral) computed using (\ref{EFTExpansion}) is real or not. 

The most general P- and CP-odd 4-point contact vertex in the large mass EFT can be schematically written as
\begin{equation}\label{AllEFTop}
	\int d\tau d^3x a^{1-2m-n} \epsilon^{(3)}\partial_i^{3+2m} \partial_\tau^n\varphi^4~,
\end{equation}
where $\epsilon^{(3)}$ is the Levi-Civita symbol for three spatial dimensions and is assumed to be contracted to the spatial derivatives ($m\geqslant0$). The power of the scale factor is fixed by dilation in exact dS spacetime. Also without any loss of generality, we restrict $n\leqslant3$ by using inflaton EOM and IBPs. Therefore, $\tilde{\psi}_4$ is given by
\begin{subequations}
	\begin{eqnarray}
	\tilde{\psi}_4&\propto& i \epsilon^{(3)}(ik_i)^{3+2m}\int_{-\infty}^0 d\tau a^{1-2m-n}\partial_\tau^n G_+^4\\
	&\propto&\epsilon^{(3)}(k_i)^3\times (k_j)^{2m}\int_{-\infty}^0 d\tau \tau^{2m+n-1}\partial_\tau^n u^{*4}\label{timeDepOfScaleFactor}\\
	&\propto&\epsilon^{(3)}(k_i)^3\times (k_j)^{2m}\int_{-\infty}^0 d\tau \tau^{2m+n-1}\partial_\tau^n \left(1-k\frac{\partial}{\partial k}\right)^4 e^{iK\tau}\label{beiguosuanfu}\\
	&\propto&\epsilon^{(3)}(k_i)^3\times (k_j)^{2m}\left(1-k\frac{\partial}{\partial k}\right)^4\int_{-\infty}^0 d\tau \tau^{2m+n-1}(iK)^n e^{iK\tau}\\
	&\propto&\epsilon^{(3)}(k_i)^3\times (k_j)^{2m}\left(1-k\frac{\partial}{\partial k}\right)^4(iK)^n (-1)^{n+1}(iK)^{-2m-n}\Gamma(2m+n)\\
	&\propto&\epsilon^{(3)}(k_i)^3\times (k_j)^{2m}\left(1-k\frac{\partial}{\partial k}\right)^4 K^{-2m}\in\mathbb{R} ~.
	\end{eqnarray}\label{proof4ptc}
\end{subequations}
Here in every step some real factors are omitted for simplicity. In (\ref{beiguosuanfu}) we have evoked the symmetry-breaking operator introduced in \cite{Chu:2018ovy}. The final $\tilde{\psi}_4$ is manifestly real for $2m+n>0$. Yet it is P-odd because of the Levi-Civita symbol. Henceforth, by the preceding argument, none of the EFT operators of the form (\ref{AllEFTop}) contribute to the in-in observables. They are all pure phases of the wavefunction. This null result can be generalized to all tree-level diagrams and we sketch the proof in Appendix~\ref{EFTtreeAppx}.

Several important remarks are given below.

First, the P-odd local EFT operators are unobservable not because of the indistinguishability between the external lines, since cancellation takes place before permutation. In other words, P-odd local EFT operators made by four massless fields with different flavors are also unobservable in the 4-point function in exact dS.

Second, integrating out extra fields usually yields a modification of the inflaton sound speed. However a constant change of sound speed does not influence the conclusion above since it can be absorbed into a global redefinition of spatial coordinates.

Third, dilation symmetry is crucial in this derivation. For example, any small departure of the scale factor in (\ref{timeDepOfScaleFactor}) from the dS case, where $a(\tau)\propto\tau^{-1}$, can easily produce an imaginary part for $\tilde{\psi}_4$. As a special case, when spacetime is flat, $a\equiv 1$, the 4-point function will receive contributions from the P-odd contact EFT operators. Thus in the real inflationary universe, it is possible to have observable P- and CP-violating effects in a local EFT, but they are subjected to a slow-roll suppression (see \cite{Soda:2011am,Shiraishi:2011st} for similar statements in the case of graviton bispectrum).

Indeed, the large-mass fall-off behavior of the P- and CP-violating signal obtained in our model is faster than the naive power laws in (\ref{EFTLOandNLO}). This suggests that this signal is suppressed by exponential factors like $e^{-\pi\mu_h}$ or $e^{-\pi\mu_Z}$, which decrease faster than any powers of $\frac{H}{m_Z}\sim \frac{1}{\mu_Z}$ or $\frac{H}{m_h}\sim \frac{1}{\mu_h}$. Hence it is impossible to see the P- and CP-violating signal at any fixed order in the EFT expansion. The signal is caused by the non-perturbative on-shell particle creation in the dS spacetime. Real particles, when produced, automatically break dilation symmetry, which is crucial in the proof of the cancellation of EFT signals above. As a further confirmation, we show in Appendix~\ref{betaCheck} that once the $\beta_\lambda$ coefficient in (\ref{IRexpansion}) is turned off, the signal decreases dramatically by another exponential factor.

Interestingly, we can recast this statement in a more practical form. During inflation, the P- and CP-violating signals from single-field EFT terms are at least slow-roll suppressed. Henceforth, given the measurement bounds on the slow-roll parameters, perturbative unitarity put a constraint on the signal strength of P- and CP-violating scalar trispectrum. Any observation of these signals that exceeds the maximum amount allowed by perturbative single-field EFT alone is an indication of extra degrees of freedom being excited from the BD vacuum during inflation. This is a characteristic feature of quasi-single field inflation and is essential for utilizing the cosmological collider in the possible future.

\section{Conclusions and outlook}\label{Conclusions}
In this work, we studied the simplest P and CP violating signals on the cosmological collider, which opens up a new window of probing P and CP properties of fundamental physics at very high energy scales. We presented a simple model consisting of a Higgs sector and a P- and CP-violating $\theta$ term as a proof of concept that demonstrates this idea. In this model, the 4-point correlation function of the primordial fluctuations is P-odd and possesses an imaginary part with odd dihedral-angle dependence. 
This is very similar to the 4-body decay of a heavy scalar field $X$ that has been routinely studied in collider physics. We studied the perturbation theory description, the partially non-perturbative EOM method as well as the large-mass behavior of this model. For suitable choices of parameter values, the signal strength can be as large as $\tau_{NL}\sim\mathcal{O}(100)$, which is promising for future measurements. 


We showed that the CP-violating local effective operators with the inflaton field alone are all unobservable in an exact dS background. So the CP violating effects in our model will be exponentially suppressed by masses. They are significant only when these masses are close to or below the Hubble scale, in which case the intermediate particles will be created on-shell. Therefore, our study suggests that the CP violating effects in the imaginary part of the trispectrum is usually accompanied by the conventional cosmological collider signals, in the real part of the collapse-limit trispectrum. 

In real inflationary scenario, the local inflaton EFT description of CP-violations is subject to slow-roll suppression. Thus unitarity bound and slow-roll parameters constraints give a maximum amount of CP-violation on the trispectrum in the EFT description. If signals larger than this bound is observed, it is likely to be caused by non-local particle production effects such as in our model. Therefore finding large P- and CP-violations in the scalar trispectrum would be able to serve as a good omen of cosmological collider physics.

Many questions along this direction are left untouched in the current work which we hope to address in the future. We conclude this paper by mentioning a few of them. First, the model we are considering is still based on the EFT operators and is not UV-complete. It will be interesting to embed it in a UV-complete model of particle physics. Second, in our calculation, fermions do not show up in the external lines, thus information of their CP violations are washed out in the observables on the future boundary of dS, which are directly relevant to the late time universe. Although we can predict the CP violation effects in the fermion sector, it is still unclear how one can observe these effects in the boundary correlators of curvature perturbations. Third, in the numerical examples shown in this work, we have assumed that the intermediate particles are heavy, $i.e.$, $m_Z>H/2$ so that we have oscillatory signals in the real part of the trispectrum. This however is not the only possibility since we can readily generalize the calculation to the case of light fields with $m_Z<H/2$, where the signals are expected to be larger. Fourth, in this work we did not perform a systematic analysis of CP-violating single-field EFT in inflation, where dS isometries are softly broken. Thus we did not give any concrete bound on CP-violation in the single-field EFT description. It is important to have an explicit demonstration of this idea in the future work. Fifth, it is interesting to consider the relation between our model and the spontaneous baryogenesis scenario \cite{Cohen:1987vi,Cohen:1988kt,Turner:1988sq}, where net baryon number is created from the relaxation of a pseudo-Goldstone boson field of broken $U(1)_B$. If the $\theta$ term is embedded into UV models such as this, its chemical potential on baryons may lead to spontaneous baryogenesis. The relevant cosmological collider signals may give possible hints for spontaneous baryogenesis.

\begin{acknowledgments} 
We would like to thank Andrew Cohen, Kunfeng Lyu and Siyi Zhou for helpful discussions and comments. 
\end{acknowledgments}

\appendix
\section{Feynman rules for Schwinger-Keldysh diagrammatics}\label{FeynmanRulesAppendix}
The Schwinger-Keldysh propagators are obtained from 2-point functions with different time-orderings. Two of them carrying signature $(+-)$ and $(-+)$ are disconnected contributions due to the sewing condition on two time contours at measurement time. For scalar fields, they are simply
\begin{subequations}
	\begin{eqnarray}
		G_{I;+-}(k,\tau_1,\tau_2)&=&v_{k}^I(\tau_1)^*v_{k}^I(\tau_2)\\
		G_{I;-+}(k,\tau_1,\tau_2)&=&v_{k}^I(\tau_1)v_{k}^I(\tau_2)^*~.
	\end{eqnarray}
\end{subequations}
Here $I=\phi,h,\varphi,\cdots$ stands for different scalar fields\footnote{Notice that here $I$ is not summed over.}, with $v_{k}^I$ being the corresponding mode function,
\begin{equation}
	v_{k}^I(\tau)=-i\frac{\sqrt{\pi}}{2}e^{i\pi\left(\frac{i\mu_I}{2}+\frac{1}{4}\right)}H(-\tau)^{3/2}H^{(1)}_{i\mu_I}(-k\tau),\quad \mu_I = \sqrt{\frac{m_I^2}{H^2}-\frac{9}{4}}~.
\end{equation}
The other two carrying signature $(++)$ and $(--)$ are Green functions of the linear EOMs and represent the propagation of a wave mode.
\begin{subequations}
	\begin{eqnarray}
	G_{I;++}(k,\tau_1,\tau_2)&=&\Theta(\tau_1-\tau_2)G_{I;-+}(k,\tau_1,\tau_2)+\Theta(\tau_2-\tau_1)G_{I;+-}(k,\tau_1,\tau_2)\\
	G_{I;--}(k,\tau_1,\tau_2)&=&\Theta(\tau_1-\tau_2)G_{I;+-}(k,\tau_1,\tau_2)+\Theta(\tau_2-\tau_1)G_{I;-+}(k,\tau_1,\tau_2)~.
	\end{eqnarray}
\end{subequations}
Diagrammatically, $(+)$ sign is represented by a black dot while $(-)$ is represented by a white dot. When one end of the propagator is on the boundary $\tau=0$, these four propagators reduce to two varieties, namely, $G_{I;+}(k,\tau)=G_{I;-+}(k,0,\tau)=G_{I;++}(k,0,\tau)$ and $G_{I;-}(k,\tau)=G_{I;+-}(k,0,\tau)=G_{I;--}(k,0,\tau)$. The boundary end is usually represented by a white square.

The vertex rules for computing correlation functions of the Maxwell-Chern-Simons theory in dS spacetime are given in (\ref{MCSindSFeynRules}), and those for the more realistic model (\ref{AZHIEFTop}) are given in (\ref{AZHIFeynRules}). Each vertex has two choices of color, accounting for their time-ordering. The Feynman rules for negative color vertices are obtained by a complex conjugation and a momentum reversal.
\begin{subequations}\label{MCSindSFeynRules}
	\begin{eqnarray}
	\begin{gathered}
	\includegraphics[width=3cm]{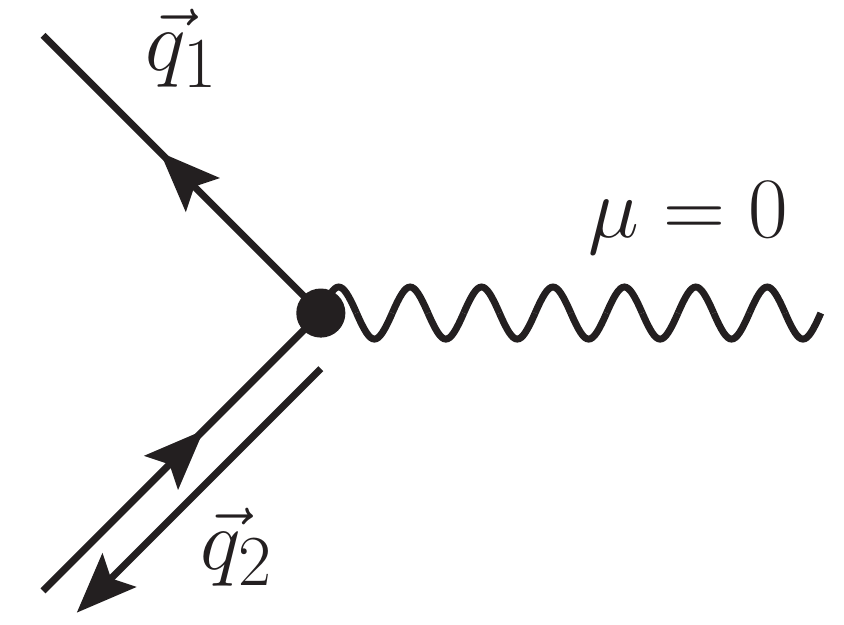}
	\end{gathered}&=&e\int d\tau a(\tau)^2(G_1 G_2^{*\prime}-G_1^{\prime}G_2^{*})D_0\\
	\begin{gathered}
	\includegraphics[width=3cm]{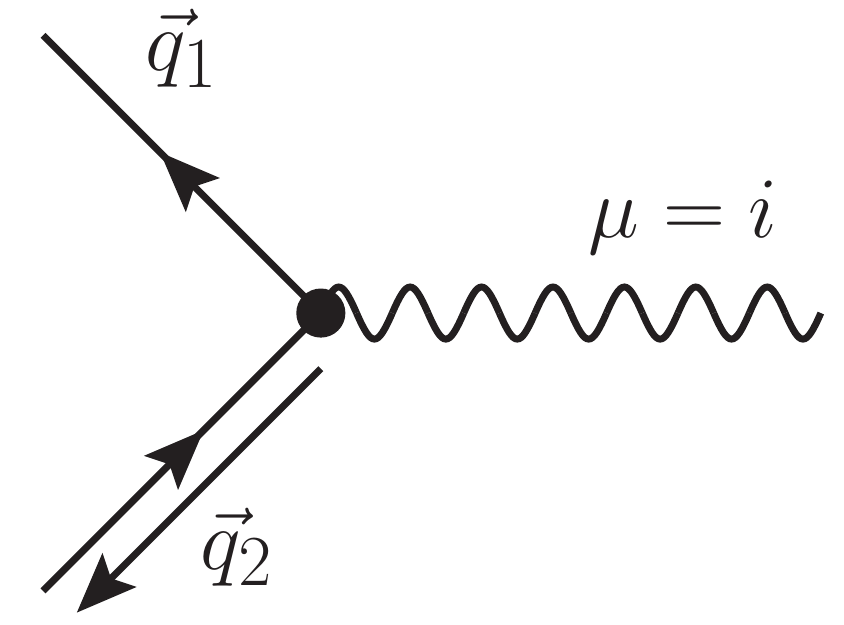}
	\end{gathered}&=&-ie\int d\tau a(\tau)^2 G_1 G_2^*(\vec{q}_1-\vec{q}_2)_i D_i\\
	\begin{gathered}
	\includegraphics[width=3cm]{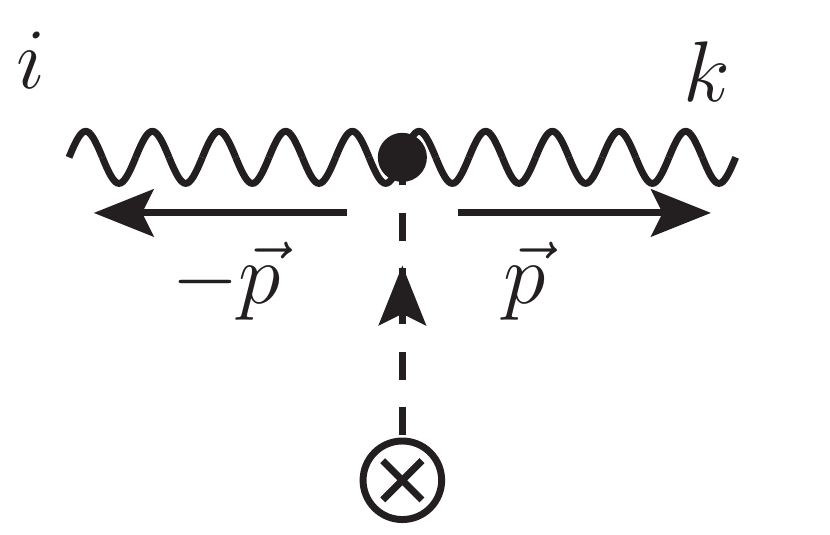}
	\end{gathered}&=&c_0\int d\tau \theta'(\tau) \epsilon^{ijk}D_i p_j D_k~.
	\end{eqnarray}
\end{subequations}

\begin{subequations}\label{AZHIFeynRules}
	\begin{eqnarray}
	\begin{gathered}
	\includegraphics[width=3cm]{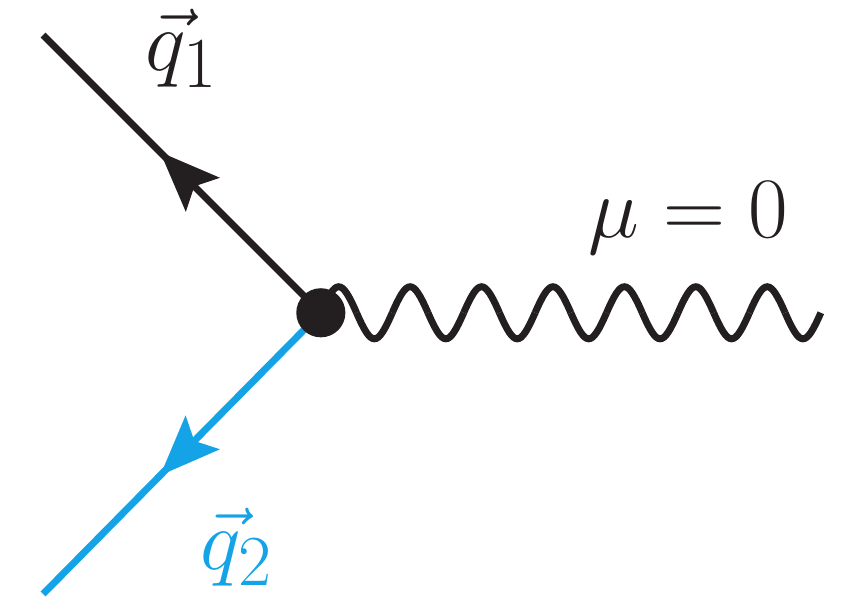}
	\end{gathered}&=&-i\frac{\rho_{1,Z}}{\dot{\phi}_0}\int d\tau a(\tau)^2G_h G_\varphi' D_0\\
	\begin{gathered}
	\includegraphics[width=3cm]{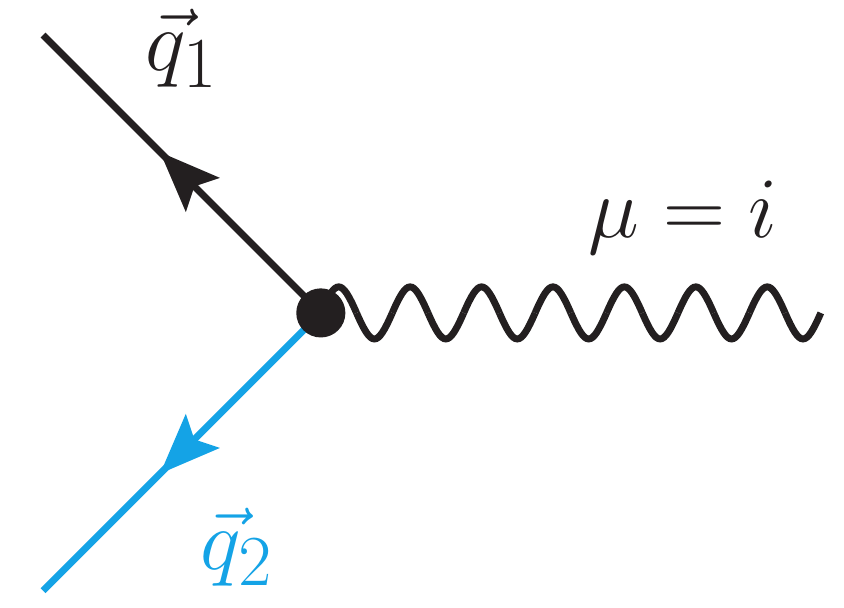}
	\end{gathered}&=&\frac{\rho_{1,Z}}{\dot{\phi}_0}\int d\tau a(\tau)^2 G_h G_\varphi q_{1i} D_i\\
	\begin{gathered}
	\includegraphics[width=3cm]{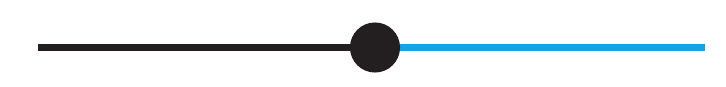}
	\end{gathered}&=&-i\rho_2\int d\tau a(\tau)^3 G_h G_\varphi'\\
	\begin{gathered}
	\includegraphics[width=3cm]{thetaZZVertex}
	\end{gathered}&=&c_0\int d\tau \theta'(\tau) \epsilon^{ijk}D_i p_j D_k~.
	\end{eqnarray}
\end{subequations}

\section{Tree-level diagrams in single-field EFT in dS}\label{EFTtreeAppx}

In this appendix we generalize the discussions in Sect.~\ref{LargeMassEFT} for CP-violating single-field EFT in dS. We will prove that in an exact dS background, any single-field EFT operator of the form
\begin{equation}\label{CP-oddEFTop}
	\Delta \mathcal{L}_{N}^{\text{CP-odd}} = a^{1-2m-n} \epsilon^{(3)}\partial_i^{3+2m} \partial_\tau^n\varphi^N~
\end{equation}
does not lead to observable CP-violating signals at tree-level.

\subsection{A theorem for CP-even EFTs}
In order to show the absence of CP-violation signals, we work in the EFT without P- and CP-violation first and prove a theorem that is by itself meaningful.

\begin{theorem}
	The tree-level contribution to the wavefunction exponent $\tilde{\psi}$ is real for massless scalar EFTs respecting parity, dilation and rotational invariance.
\end{theorem}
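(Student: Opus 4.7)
The plan is to generalize the explicit calculation carried out in (\ref{proof4ptc}) for the single $4$-point contact vertex to arbitrary tree diagrams with any number of vertices and external legs. First, I would formulate $\tilde\psi$ at tree level in the wavefunction formalism as a sum of connected diagrams built from external bulk-to-boundary propagators $K(k,\tau)=u^*(k,\tau)/u^*(k,0)$, internal bulk-to-bulk propagators $G(k,\tau_1,\tau_2)$, and vertex insertions of the P-, dilation- and rotation-invariant EFT operators. For a massless scalar in exact dS, $u(k,\tau)\propto k^{-3/2}(1+ik\tau)e^{-ik\tau}$, so both propagators are sums of terms of the universal form (polynomial in $\tau$)$\,\times\,e^{\pm ik\tau}$.

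Second, I would exploit the identity $(1\pm ik\tau)e^{\mp ik\tau}=(1\mp k\pd_k)e^{\mp ik\tau}$ already used below (\ref{beiguosuanfu}) for the $N=4$ contact case, to rewrite every propagator factor as a momentum-differential operator acting on a bare exponential. Since $\pd_k$ is manifestly real, this step preserves reality and pulls all polynomial dependence on the vertex times outside the time integrals. Dilation invariance then fixes the power of the scale factor at each vertex as $a^{\alpha_j}\propto(-\tau_j)^{-\alpha_j}$ with integer $\alpha_j$, while rotation and parity guarantee that the momentum prefactors contract into real scalar combinations without a Levi-Civita tensor. The problem therefore reduces to showing that the residual nested time integrals
\begin{equation*}
 \prod_j\int_{-\infty}^{\tau_{\text{parent}(j)}}\!d\tau_j\,\tau_j^{p_j}\,\exp\!\Big[i\sum_a\varepsilon_{aj}E_a\tau_j\Big]
\end{equation*}
are real up to the overall factor of $i^V$ picked up from the $V$ vertices of the wavefunction Feynman rules.

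Third, I would evaluate these integrals iteratively from the innermost to the outermost vertex. Every elementary integral $\int_{-\infty}^{\tau_{\text{parent}}}d\tau\,\tau^p e^{iE\tau}$ with $E>0$ and integer $p\geqslant-1$ is an incomplete gamma function that evaluates to $e^{iE\tau_{\text{parent}}}$ times a real polynomial in $\tau_{\text{parent}}$ and $1/E$ (with a logarithm for the marginal $p=-1$ case, still real). Each nested integration contributes one factor of $i$, which cancels precisely against the explicit $i$ per vertex in the wavefunction rules, so the final result is a real rational function of the momenta as claimed. The main obstacle I anticipate is the combinatorial bookkeeping of the bulk-to-bulk propagator: one must show that its boundary subtraction at $\tau=0$ and the cross terms between $u$ and $u^*$ across the two time-ordered regions combine coherently into a real result rather than merely checking reality region by region. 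Subsidiary technicalities — $\tau\to0^-$ boundary behavior for $p_j=-1$ (handled by the standard $i\epsilon$ prescription) and the degenerate-energy case requiring partial fractions — are routine once the main bookkeeping is organized.
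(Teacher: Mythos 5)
Your overall strategy---reduce every tree diagram to explicit time integrals of monomials and track factors of $i$---is the same as the paper's, but two steps as written do not go through. First, the claim that each elementary integral $\int_{-\infty}^{T}d\tau\,\tau^{p}e^{iE\tau}$ evaluates to $e^{iET}$ times a \emph{real} polynomial with a single overall factor of $i$ is false: repeated integration by parts gives the coefficient of $T^{p-m}$ proportional to $(iE)^{-(m+1)}$, so different powers of $T$ carry different powers of $i$. What is actually preserved is only the parity of $(\text{power of }i)+(\text{power of }\tau)$, which flips by one unit per integration; this is precisely the $A_j+B_j$ bookkeeping the paper sets up, and without it your ``one $i$ per integration cancels one $i$ per vertex'' conclusion does not follow. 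Relatedly, dilation invariance must be used more sharply than ``the scale-factor exponent is an integer'': the vertex must take the form $a^{4-2m-n}\partial_i^{2m}\partial_\tau^{n}$, so that the $-n$ parity shift from the time derivatives is compensated by the extra $a^{-n}$, making each vertex's total shift even. Second, and more seriously, you miss the contact terms produced when time derivatives from two different vertices both act on the $\Theta$-functions of a shared bulk-to-bulk propagator: $\partial_{\tau_1}\partial_{\tau_2}G$ contains $\delta(\tau_1-\tau_2)$ times the Wronskian $v_kv_k^{\prime*}-v_k^{\prime}v_k^{*}=ia^{-2}$, an explicitly \emph{imaginary} factor that naively destroys reality. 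The paper's proof spends real effort here, showing that after the $\delta$-function collapses one time integration (absorbing one of the explicit vertex $i$'s) the induced contact vertex is again a real-coefficient operator inside the same parity/dilation/rotation-invariant tower, so the diagram reduces to an already-counted one with a real numerical factor. Your anticipated ``combinatorial bookkeeping of the bulk-to-bulk propagator'' points at the right place but does not identify or resolve this specific obstruction.

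A lesser technical point: the identity $(1\pm ik\tau)e^{\mp ik\tau}=(1\mp k\partial_k)e^{\mp ik\tau}$, which works cleanly for the contact diagram in (\ref{proof4ptc}) because each bulk-to-boundary propagator carries its own external momentum, does not directly apply to a bulk-to-bulk line, where the same internal momentum appears at both time arguments and $(1-k\partial_k)$ acting on $e^{-ik(\tau_1-\tau_2)}$ does not reproduce the product $(1+ik\tau_1)(1-ik\tau_2)$. You would need auxiliary momenta set equal at the end, or---as the paper does---simply forgo the trick and count powers of $i$ and $\tau$ directly in the expansion of (\ref{wavefunProps}).
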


\begin{proof}
	Consider a general connected tree diagram with $E$ external lines, $I$ internal lines and $V$ vertices. The set of external lines is denoted $\mathcal{E}$ while the set of internal lines is denoted $\mathcal{I}$. The set of vertices is denoted $\mathcal{V}$. Then a perturbative calculation of $\tilde{\psi}_E$ is given schematically by
	\begin{equation}
	\tilde{\psi}_E=\int_{-\infty(1-i\epsilon)}^{0}\prod_{v\in \mathcal{V}}i d\tau_v D_{v}\prod_{e\in\mathcal{E}}K_e\prod_{e'\in\mathcal{I}}G_{e'}~,
	\end{equation}
	where $D_{v}$ is the differential operator at vertex $v$ originating from the interaction $\Delta\mathcal{L}_{v,N_v}=D_v\varphi^{N_v}$ and is understood to contract with the $N_v$ propagators in all possible ways. The bulk-to-boundary propagator and bulk-to-bulk propagator for a massless scalar are given by \cite{Anninos:2014lwa}:
	\begin{eqnarray}\label{wavefunProps}
	K(k,\tau)&=&(1-i k \tau)e^{i k\tau}\\
	\nonumber G(k,\tau_1,\tau_2)&=&\frac{H^2}{2k^3}(1+ik\tau_1)(1-ik\tau_2)e^{-ik(\tau_1-\tau_2)}\Theta(\tau_1-\tau_2)+(\tau_1\leftrightarrow\tau_2)\\
	&&-\frac{H^2}{2k^3}(1-ik\tau_1)(1-ik\tau_2)e^{ik(\tau_1+\tau_2)}~.
	\end{eqnarray}
	The bulk-to-boundary propagator solves homogeneous EOM while the bulk-to-bulk propagator solves EOM with a point source under a Dirichlet boundary condition. For convenience, we change to the variable $x\equiv-\tau$. Then $\tilde{\psi}_E$ takes the form
	\begin{equation}
	\tilde{\psi}_E=\int_{0}^{\infty(1-i\epsilon)}\left[\prod_{v\in \mathcal{V}}i dx_v D_v\prod_{e\in\mathcal{E}}K_e\prod_{e'\in\mathcal{I}}G_{e'}\right]_{\tau_v\rightarrow-x_v}~.
	\end{equation}
	
	The differential operators representing local interactions act on the time-ordering $\Theta$-function part as well as the mode-function part of the propagators. The case of a single time derivative acting on the propagator contains no contribution from differentiating the $\Theta$-function due to the symmetric property $G(k,\tau_1,\tau_2)=G(k,\tau_2,\tau_1)$. However, there arises a new term from differentiating the $\Theta$-function when two time derivatives are involved. In terms of the massless mode function $v_k(\tau)=\frac{H}{\sqrt{2k^3}}(1+ik\tau)e^{-ik\tau}$, a straightforward computation gives
	\begin{eqnarray}
		\nonumber\partial_{\tau_1}\partial_{\tau_2}G(k,\tau_1,\tau_2)&=&\Theta(\tau_1-\tau_2)v_k^\prime(\tau_1)v_k^\prime(\tau_2)^*+\Theta(\tau_2-\tau_1)v_k^\prime(\tau_1)^*v_k^\prime(\tau_2)-v_k^\prime(\tau_1)^*v_k^\prime(\tau_2)^*\\
		&&+\delta(\tau_1-\tau_2)\left(v_k v_k^{\prime*}-v_k^\prime v_k^*\right)(\tau_1)~,
	\end{eqnarray}
	where the last term is a new contact contribution proportional to the Wronskian of the mode function, $v_k v_k^{\prime*}-v_k^\prime v_k^*=i a^{-2}$. Thus a contact interaction is induced from joining two derivative interactions with a propagator. In the Hamiltonian approach, this is essentially due to derivative interactions affecting the definition of canonical momenta \cite{Chen:2017ryl}. Being a tree-level effect, the induced contact interaction apparently still preserves parity, dilation and rotational symmetry and so is still within the tower of EFT operators. And every time a contact interaction is induced, the original diagram becomes a diagram with one internal line short (see FIG.~\ref{ReductionExample}). Hence in this way we can reduce a given tree diagram with $E$ external lines to a set of irreducible diagrams, each with $E$ external lines and without contact contributions hidden in the propagators. This set of diagrams is the same as the set of tree diagrams generated by the tower of EFT operators, except for (diagram-dependent) real numeric factors which is not important for our purpose.
	\begin{figure}[h!]
		\centering
		\includegraphics[width=15cm]{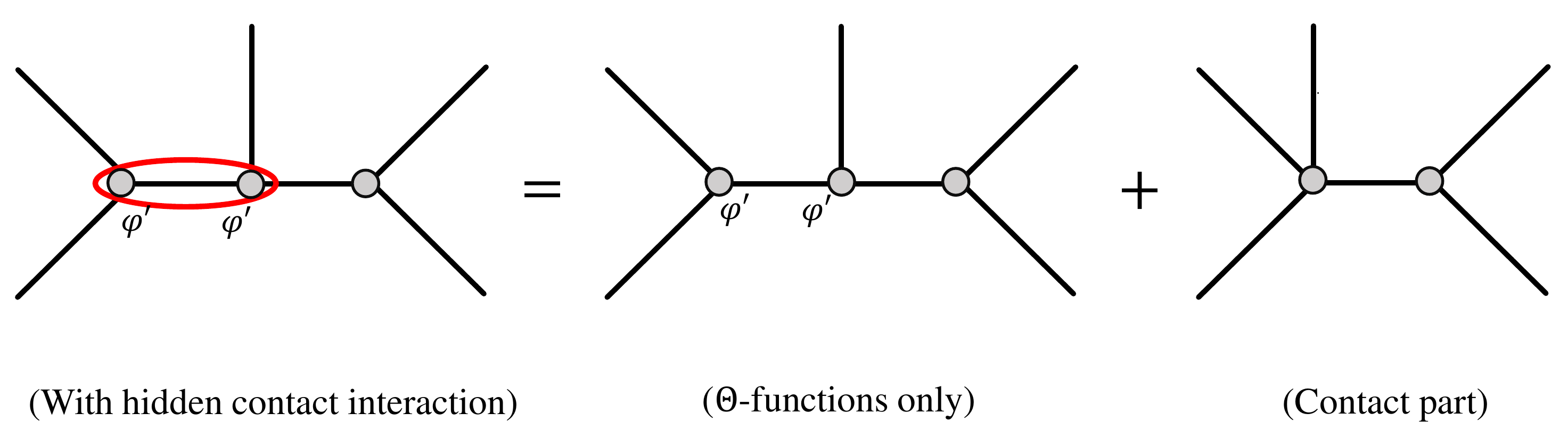}
		\caption{An illustration of the reduction process. Here we take $E=5$ and the red circle indicates that there is a hidden contact contribution in the derivative interactions. We can separate this contribution and form the second diagram on the right, with the first diagram containing $\Theta$-functions only. But the second diagram already exists in the perturbation theory, so this merely modifies its coefficient by a real numeric factor.}\label{ReductionExample}
	\end{figure}

	Now having reduced the hidden contact interactions, we have obtained diagrams in which only $\Theta$-functions are present. Choose any such irreducible diagram with $I$ internal lines and $V$ vertices, we can first expand it to the sum of $3^I$ terms, each term having a particular integration ordering. These different integration orderings define various subregions of $\mathbb{R}_+^V$. After further expanding the integrand to a sum of monomials, we see that they all take the same form
	\begin{equation}\label{monoimialForm}
	(\text{Real factor})\times\int_\mathbb{T}i^V d^Vx \prod_{j=1}^V i^{A_j} x_j^{B_j} e^{-i\sum_j p_j x_j}~,
	\end{equation}
	where $\mathbb{T}\subset \mathbb{R}_+^V$ is the integration region and $p_j$'s are linear combinations of internal and external momenta, which will be of no importance in our proof. The relevant quantities here for the reality of (\ref{monoimialForm}) are the powers $A_j$ and $B_j$, as can be seen after a Wick rotation. According to the $i\epsilon$-prescription, the integral factorizes into
	\begin{equation}
	i^{\sum_j(A_j+B_j)}\times(\text{Real factor})\times(-1)^{\sum_j B_j}\int_\mathbb{T}d^Vx \prod_{j=1}^V x_j^{B_j} e^{-\sum_j p_j x_j}~.
	\end{equation}
	The integral is now purely real and the only (possibly) imaginary part comes from the prefactor, which is real if $\sum_j(A_j+B_j)$ is even.
	
	By a simple power-counting method, we can show that $A_j+B_j$ is even for any $j$.
	
	First let us consider interactions without time derivatives. Parity, dilation as well as rotational invariance fix the general form of the interaction to be
	\begin{equation}
	D_j=a^{4-2m}\partial_i^{2m}~.
	\end{equation}
	Since we are working in the perfect dS limit, each scale factor $a(\tau_j)\propto \tau_j^{-1}\propto x_j^{-1}$ decrease $B_j$ by $-1$. Thus the scale factors in the interaction vertex change $B_j$ by $\Delta B_j=2m-4$. From (\ref{wavefunProps}) we see that $i$ and $\tau_j=-x_j$ always appear together in the propagators, adding $\Delta A_j=\Delta B_j=1$ every time expanded. As a result, starting with zero, $A_j+B_j$ always changes by an even number, giving rise to an even $A_j+B_j$ in the end.
	
	Second, consider interactions with time derivatives,
	\begin{equation}
	D_j=a^{4-2m-n}\partial_i^{2m}\partial_{\tau_j}^n~.
	\end{equation}
	Each time derivative acting on (\ref{wavefunProps}) either brings down a $-i$ from the exponential or lowers the power of $\tau_j$ by one\footnote{Note that since we have performed the reduction of hidden contact interactions, there is no time derivatives acting on $\Theta$-functions now.}, leading to a net change $\Delta A_j+\Delta B_j=-1$. This is canceled by an extra power of $a^{-1}$ in the interaction vertex. Therefore the total $A_j+B_j$ is again even.
	
	In summary, the prefactor $i^{\sum_j(A_j+B_j)}\in \mathbb{R}$ and every monomial (\ref{monoimialForm}) in the expansion of $\tilde{\psi}_E$ is real, leading to the conclusion $\tilde{\psi}_E\in \mathbb{R}$.
	
\end{proof}

We point out that our proof is not dependent on the flavor of the scalar field. Hence the theorem also applies to the case with multiple massless scalar fields.

\subsection{The absence of CP-violation signals}
With the help of the above theorem, we can now proceed to consider the CP-violating case.

First, it is easy to show that a single $N$-point contact diagram generated by $\Delta \mathcal{L}_{N}^{\text{CP-odd}}$ in (\ref{CP-oddEFTop}) is unobservable since the corresponding $\tilde{\psi}_N\in\mathbb{R}$ and is P-odd. Its proof is identical to (\ref{proof4ptc}), with powers of 4 replaced by $N$. Therefore, any \textit{disconnected} tree diagram made of this piece of $\tilde{\psi}_N\in\mathbb{R}$ vanishes for observables because it is a pure phase before the wavefunction.

Second, we can show that CP-violating signals in \textit{connected} tree diagrams are also unobservable. For any EFT vertex $\Delta \mathcal{L}_{N}^{\text{CP-odd}}$, we can relate it to a CP-even interaction vertex obtained by removing the $\epsilon^{ijk}\partial_i\partial_j\partial_k$ structure and adding three scale factors,
\begin{equation}
\Delta \mathcal{L}_{N}^{\text{CP-even}} = a^{4-2m-n} \partial_i^{2m} \partial_\tau^n\varphi^N~.
\end{equation}
Now, for each connected tree diagram containing a vertex $\Delta \mathcal{L}_{N}^{\text{CP-odd}}$, there is a corresponding diagram with the replacement $\Delta \mathcal{L}_{N}^{\text{CP-odd}}\rightarrow \Delta \mathcal{L}_{N}^{\text{CP-even}}$ (see FIG.~\ref{OldVsNew}). If we count the power of the scale factor, there is an extra phase $i^3$ coming from time integral, which cancels the $i$ coming from three spatial derivatives. Thus the new diagram has a same phase (up to $\pi$) as the old diagram.
\begin{figure}[h!]
	\centering
	\includegraphics[width=11cm]{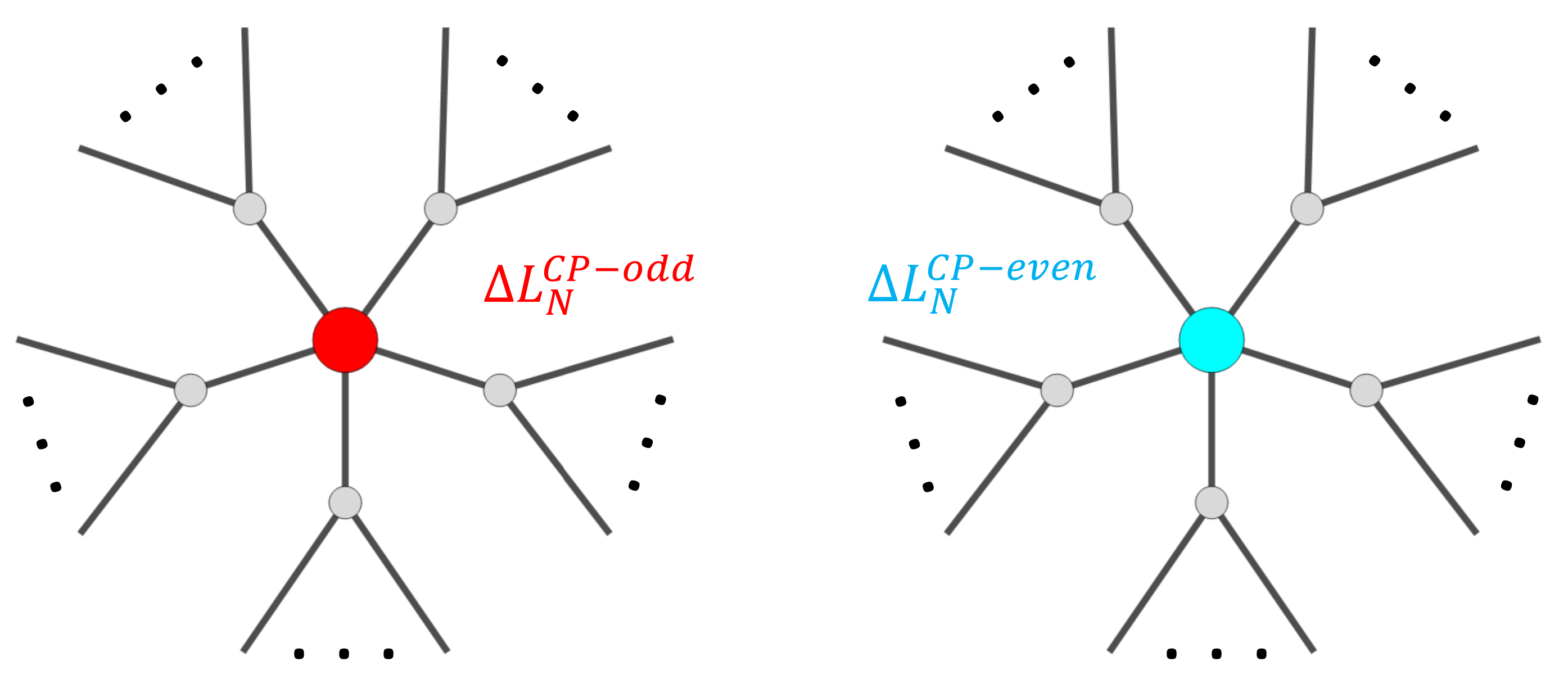}
	\caption{The old diagram (left) and the new diagram (right) differ by three spatial derivatives and three powers of scale factors, thus they share the same phase.}\label{OldVsNew}
\end{figure}

Since the new diagram comes from a massless EFT \textit{without} P-violation, according to the theorem proven in the proceeding subsection, it must be real. Consequently, the old diagram is also real\footnote{An aside: One can also directly show this by the power-counting method used in the proof of the previous theorem.}. In addition, it flips sign under a spatial reflection. Then by the pure-phase argument, the old \textit{connected} tree diagram also does not contribute to the observables.

The above $\Delta \mathcal{L}_{N}^{\text{CP-odd}}\rightarrow \Delta \mathcal{L}_{N}^{\text{CP-even}}$ treatment immediately generalizes to an odd number of $\Delta \mathcal{L}_{N}^{\text{CP-odd}}$ insertions. In contrast, an even number of $\Delta \mathcal{L}_{N}^{\text{CP-odd}}$ insertions does contribute to observable correlations functions, yet in that case CP is preserved. Thus, in conclusion, we have shown that the CP-violation signals are indeed absent in the tree-level observables.


\section{A check on the mass dependence}\label{betaCheck}
Here we numerically check the dependence of the strength of the CP-violating signal on the mass of the gauge boson, keeping the mass of the Higgs fixed. The result is plotted in FIG.~\ref{ExpCheck}. To show the importance of particle creation, we also use the IR expansion (\ref{IRexpansion}) and turn off the negative-frequency modes by setting $\beta_\lambda\equiv 0$, which eliminates the particle creation effects up to higher orders in $e^{-\pi\mu_Z}$. As is clear from FIG.~\ref{ExpCheck}, the CP-violating signal decreases approximately as $e^{-\pi\mu_Z}$. In addition, the IR expansion ($\alpha+\beta$) provides a good approximation to the full solution, while the positive-frequency-only result ($\alpha$) is smaller than the full result by yet another exponential factor. This demonstrates the importance of the $\beta$ term.

We should point out that the exponential dependence on $\mu_Z$ is only valid for large $\mu_h$. In other words, it is possible that the CP-violating signal is suppressed only by powers of $1/\mu_Z$, but with a factor $e^{-\pi \mu_h}$ in the front. This corresponds to a non-local $(\Box-m_h^2)^{-1}$ interaction with a local EFT expansion of $(\Box-m_Z^2)^{-1}$. As we have shown in Sect.~\ref{LargeMassEFT}, what is forbidden as an observable is a completely localized CP-violating operator made of four scalars that respects dilation symmetry. 

\begin{figure}[h!]
	\centering
	\includegraphics[width=10cm]{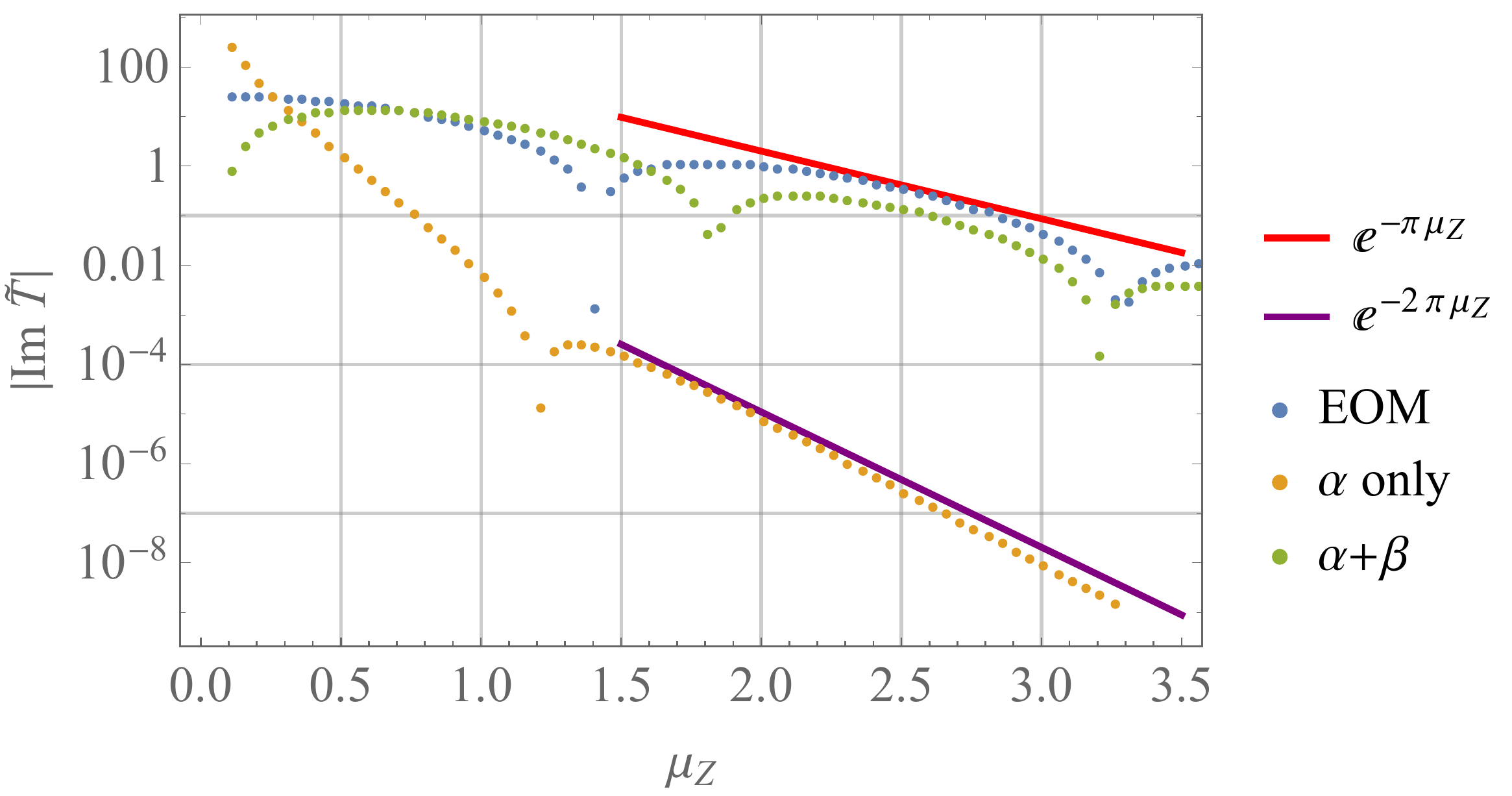}
	\caption{The imaginary part of the dimensionless trispectrum divided by couplings $\tilde{T}\equiv T/(\frac{\rho_2}{H})^2(\frac{\rho_{1,Z}}{\dot{\phi}_0})^2$ as a function of $\mu_Z$. The momentum configuration is chosen to be $k_1=q_1=1,p=2,\theta_k=\theta_q=\pi/3,\phi=0.5$. Here we have taken $c=0.1$ and $\mu_h=2.5$, which correspond to $m_h=2.92H$. The blue dots are from the partially non-perturbative EOM method discussed in Sect.~\ref{EOMmethod}, while the yellow and green dots represent respectively, the $\alpha$-term-only case and case with the full IR expansion containing both $\alpha$ term and $\beta$ term. The red and purple lines show the asymptotic behavior at large $\mu_Z$.}\label{ExpCheck}
\end{figure}

\end{document}